\crefname{item}{RQ}{RQ}
\colorlet{darkgreen}{green!80!black}
\colorlet{darkred}{red!80!black}
\tikzset{auto, >= stealth}
\tikzset{every edge/.append style={thick, shorten >= 1pt}}
\tikzset{initial/.style={draw, thick, <-, shorten <=1pt}}
\tikzset{player0/.style = {draw, thick, shape=circle, minimum size=5mm}}
\tikzset{player1/.style = {draw, thick, shape=rectangle, minimum size=5mm}}
\tikzset{bplayer0/.style = {draw, thick, shape=ellipse, minimum size=5mm,text width=1.1cm}}
\tikzset{bplayer1/.style = {draw, thick, shape=rectangle, minimum size=5mm,text width=1.6cm}}
\tikzstyle{startstop} = [rectangle, rounded corners, 
\tikzstyle{process} = [trapezium, 
\tikzstyle{io} = [rectangle, 
\tikzstyle{decision} = [diamond, 
\tikzstyle{arrow} = [thick,->,>=stealth]
\newcommand{\N}{\mathbb{N}}
\newcommand{\Q}{\mathbb{Q}}
\newcommand{\Z}{\mathbb{Z}}
\newcommand{\Ninf}{\mathbb{N}_{\infty}}
\newcommand{\Zinf}{\mathbb{Z}_{\infty}}
\newcommand{\bigO}{\mathcal{O}}
\newcommand{\abs}[1]{\left\lvert #1 \right\rvert}
\newcommand{\game}{\mathcal{G}}
\newcommand{\gamegraph}{\ensuremath{G}}
\newcommand{\play}{\ensuremath{\rho}}
\newcommand{\labellingFunc}{\mathbb{L}}
\newcommand{\weight}{w}
\newcommand{\average}{\texttt{avg}}
\newcommand{\parityLabel}{\labellingFunc_{P}}
\newcommand{\priority}{\parityLabel}
\newcommand{\spec}{\ensuremath{\Phi}}
\newcommand{\node}{V}
\newcommand{\p}[1]{\ensuremath{\text{Player}~#1}}
\newcommand{\pz}{\ensuremath{\p{0}}\xspace}
\newcommand{\po}{\ensuremath{\p{1}}}
\newcommand{\win}{\mathcal{W}}
\newcommand{\strat}{\ensuremath{\pi}}
\newcommand{\Strat}{\ensuremath{\Lambda}}
\newcommand{\minActivationEdges}{\ensuremath{\mathtt{minEdges}}}
\newcommand{\buchi}{\ifmmode B\ddot{u}chi \else B\"uchi \fi}
\newcommand{\cobuchi}{\ifmmode co\text{-}B\ddot{u}chi \else co-B\"uchi \fi}
\newcommand{\safetygame}{\mathit{Safety}}
\newcommand{\parity}{\ensuremath{\mathit{Parity}}}
\newcommand{\energy}{\ensuremath{\mathit{En}}}
\newcommand{\meanpayoff}{\ensuremath{\mathit{MP}}}
\newcommand{\computeTemp}{\textsc{computePeSTel}}
\newcommand{\cobuchiTemp}{\textsc{coB\"uchiTemp}\xspace}
\newcommand{\energyTemp}{\textsc{computeQaSTel}\xspace}
\newcommand{\template}{\ensuremath{\Gamma}}
\newcommand{\src}{\mathit{src}}
\newcommand{\livegroup}{H_\ell}
\newcommand{\livegroupSingleN}{H}
\newcommand{\colivegroup}{D}
\newcommand{\safegroup}{S}
\newcommand{\pref}{\textsf{pref}}
\newsavebox{\@brx}
\newcommand{\llangle}[1][]{\savebox{\@brx}{\(\m@th{#1\langle}\)}%
	\mathopen{\copy\@brx\kern-0.5\wd\@brx\usebox{\@brx}}}
\newcommand{\rrangle}[1][]{\savebox{\@brx}{\(\m@th{#1\rangle}\)}%
	\mathclose{\copy\@brx\kern-0.5\wd\@brx\usebox{\@brx}}}
\newif\ifFIRST
\newif\ifSECOND
\let\LISTOP\relax
\newcommand{\List}[4][\;]{#3#1%
	\FIRSTtrue
	\@for\i:=#2\do{%
		\ifFIRST\LISTOP{\i}\FIRSTfalse\else,\LISTOP{\i}\fi%
	}%
	#1#4%
	\let\LISTOP\relax
}
\newcommand{\tup}[1]{\left( #1\right)}
\newcommand{\VSet}[2][]{\let\LISTOP\val\List[#1]{#2}{\{}{\}}}
\newcommand{\VTuple}[2][]{\let\LISTOP\val\List[#1]{#2}{(}{)}}
\newcommand{\activation}{act}
\newcommand{\FindConflicts}{\ensuremath{\textsc{findConf}}\xspace}
\newcommand{\mixedTemp}{\ensuremath{\textsc{computeMiSTel}}\xspace}
\newcommand{\conflict}{\mathcal{C}}
\newcommand{\extract}{\textsc{ExtractStrat}}
\definecolor{colorblindgreen}{HTML}{004D40}
\definecolor{colorblindblue}{HTML}{1E88E5}
\definecolor{colorblindorange}{HTML}{FFC107}
\definecolor{colorblindred}{HTML}{D81B60}
\definecolor{colorblindbluealt}{HTML}{00BFFF}
\definecolor{colorblindorangealt}{HTML}{FF7F00}
\newcommand{\funcV}{\ensuremath{\mathcal{F}_V}}
\newcommand{\funcE}{\ensuremath{\mathcal{F}_{E}}}
\newcommand{\fixpoint}{\ensuremath{\mathbb{O}}}
\newcommand{\computeFixpoint}{\ensuremath{\textsc{FixPoint}}}
\newcommand{\fixpointV}{\ensuremath{\mathbb{O}_V}}
\newcommand{\fixpointE}{\ensuremath{\mathbb{O}_{E}}}
\newcommand{\closedopeninterval}[2]{\interval[open right]{#1}{#2}}
\renewcommand{\inf}{\mathtt{Inf}}
\newcommand{\plays}{\mathtt{plays}}
\newcommand{\Win}{\mathtt{Win}}
\newcommand{\aplays}{\mathtt{plays}}
\newcommand{\prefix}{H}
\newcommand{\energyTemplate}{\Pi}
\newcommand{\opt}{\mathtt{opt}}
\newcommand{\optE}{\mathtt{optE}}
\newcommand{\bound}[1]{\mathtt{B}_{#1}}
\newcommand{\finite}{\mathtt{f}}
\newcommand{\combine}{\textsc{Combine}}
\newcommand{\tool}{\texttt{QuanTemplate}\xspace}
\newcommand{\othertool}{\texttt{MPCoBuechi}\xspace}
\newcommand{\qastel}{QaSTel\xspace}
\begin{document}
	
\label{beginningofdocument}
\newoutputstream{docstatus}
\openoutputfile{main.ds}{docstatus}
\addtostream{docstatus}{Preparing}
\closeoutputstream{docstatus}
	
\title{Quantitative Strategy Templates}
\author{Ashwani Anand \and
Satya Prakash Nayak\and
Ritam Raha \and
Irmak Sa\u{g}lam \and
Anne-Kathrin Schmuck
}
\institute{Max Planck Institute for Software Systems, Kaiserslautern, Germany\\
\email{\{ashwani,sanayak,rraha,isaglam,akschmuck\}@mpi-sws.org}
}
\authorrunning{A. Anand et al.}

\maketitle              %
\vspace*{-2em}
\begin{abstract}
This paper presents (permissive) \emph{Quantitative 
Strategy Templates} (QaSTels) to succinctly represent infinitely many winning strategies in two-player energy and mean-payoff games. This transfers the recently introduced concept of \emph{Permissive (qualitative) Strategy Templates} (PeSTels) for $\omega$-regular games to games with quantitative objectives. 
We provide the theoretical and algorithmic foundations of (i) QaSTel synthesis, and (ii) their (incremental) combination with PeSTels for games with mixed quantitative and qualitative objectives. Using a prototype implementation of our synthesis algorithms, we demonstrate empirically that QaSTels extend the advantageous properties of strategy templates over single winning strategies -- known from PeSTels -- to games with (additional) quantitative objectives. This includes (i) the enhanced robustness of strategies due to their runtime-adaptability, and (ii) the compositionality of templates w.r.t.\ incrementally arriving objectives. We use control-inspired examples to illustrate these superior properties of QaSTels for CPS design. 

\end{abstract}

\sloppy

\section{Introduction}

Two player games on finite graphs provide a powerful abstraction for modeling the strategic interactions between reactive systems and their environment.  
In this context, game-based abstractions are 
often enriched with quantitative information to model aspects like energy consumption, cost 
minimization, or maintaining system performance thresholds under varying conditions. As a result, 
games with quantitative objectives, such as energy~\cite{energy} or mean-payoff~\cite{meanpayoff} have gained significant attention in recent years. These games have been applied to a wide range of CPS problems, such as energy management in electric vehicles~\cite{energy}, optimizing resource-constrained task management in autonomous robots~\cite{task,finitetask}, embedded 
systems~\cite{ChakrabartiAHS03}, and dynamic resource allocations~\cite{AVNI202042}.

In practical CPS applications, strategic control decisions (i.e., the moves of the controller 
player in the abstract game) are typically implemented via low-level actuators. %
For instance, a robot's strategic decision to move to a different room involves motion control integrated with LiDAR-based obstacle avoidance. However, due to unmodeled dynamics of the physical environment that become observable only at runtime, strategic adaptations may be necessary~\cite{Vazquez-Chanlatte21,GittisVF22}.
For example, if the robot detects that an entrance is obstructed by obstacles (e.g., humans), it should dynamically adjust its strategy and navigate through an alternative door instead.
Therefore, synthesized (high-level) control strategies must not only be correct-by-design, 
but also flexible enough to accommodate runtime adaptations. This control-inspired property of strategies has 
recently been formalized via permissive strategy templates 
(PeSTels)~\cite{Anandetal_SynthesizingPermissiveWinning_2023}, which are similar to classical 
strategies but contain a vast set of relevant strategies in a succinct and simple data 
structure. Intuitively, PeSTels \emph{localize} required progress towards $\omega$-regular 
objectives by classifying outgoing edges of a control-player vertex as unsafe, co-live and 
live -- indicating edges to be taken never, finitely often and infinitely often, respectively, 
in case the source vertex of the edge is visited infinitely often.
 
Inspired by PeSTels and driven by the need to capture quantitative objectives in CPS design, this paper introduces \emph{Quantitative Strategy Templates} (QaSTels) for games with energy and mean-payoff objectives. n the context of the previously discussed robot example, such games model scenarios where a robot with limited battery
must make informed re-routing decisions at runtime, ensuring that its remaining energy suffices for the required tasks.
 Similar to PeSTels, QaSTels localize necessary information about the \emph{future} of the game. In contrast to PeSTels, which localize \emph{liveness} requirements induced by a \emph{qualitative} objective, QaSTels consider \emph{quantitative} objectives and thereby localize the required energy loss and gain through local edge annotations. %
Knowing the current energy level at runtime, the control player can select from all edges that remain feasible given the available energy. This contrasts with standard game-solving approaches, which typically store only a single (optimal) action per node.

Concretely, our contributions are as follows:
\begin{inparaenum}[(i)]
    \item We formalize QaSTels for energy and mean-payoff 
    objectives, and present algorithms to extract winning strategies from them.
    \item We introduce an edge-based value iteration algorithm %
    to compute winning %
    QaSTels and show that QaSTels are \emph{permissive}, i.e., they capture all winning strategies for energy objectives and all finite-memory winning strategies for 
    mean-payoff objectives. %
    \item We combine QaSTels with a bounded version of PeSTels, and propose an \emph{efficient incremental algorithm} for 
    updating templates and strategies 
    under newly arriving qualitative and quantitative objectives.
   \item We highlight the advantages of strategy templates for games with quantitative objectives, and the combination of quantitative and qualitative objectives, via extensive experiments on benchmarks derived from the SYNTCOMP benchmark suit.
\end{inparaenum}
Detailed proofs for all claims are provided in the appendix.

\smallskip
\noindent\textbf{Related Work.}
The computation of permissive strategies has received significant attention over the past decade, particularly for qualitative objectives~\cite{bernet2002:permissiveStrategies,neider2014:mullergamestosafety, bouyer2011:measuringpermissiveness, Klein2015:mostGeneralController}. A key development in this area is the introduction of permissive strategy templates (PeSTels) by Anand et al.~\cite{Anandetal_SynthesizingPermissiveWinning_2023,Anandetal_ComputingAdequatelyPermissive_2023}, which capture a strictly broader class of winning strategies while maintaining the same worst-case computational complexity as standard game-solving techniques. This paper extends the idea behind PeSTels to quantitative games.%

When only `classical' synthesis algorithms are available, achieving the adaptability of strategies that motivate PeSTels requires recomputing a new 
strategy from scratch whenever moves become unavailable at runtime or additional objectives arise.
For the objectives considered in this paper, this entails using `classical' synthesis algorithms for energy objectives~\cite{energy}, mean-payoff objectives~\cite{meanpayoff,fastermeanpayoffgames}, multi mean-payoff objectives~\cite{multimeanpayoff}, and mean-payoff co-Büchi objectives~\cite{ChatterjeeHJ05,ChatterjeeHS17}. However, since these approaches recompute strategies from scratch at each iteration, they are computationally expensive. Our benchmark experiments demonstrate that QaSTel-based adaptations offer a more efficient alternative for the applications considered. %

\section{Preliminaries}\label{sec:preliminaries}
In this section, we introduce the basic notations used throughout the paper. We denote $\Z$ as 
the set of integers, $\Q$ as the set of rational numbers, $\N$ as the set of natural numbers 
including 0, and $\N_{>0}$ as the set of positive integers.
Let $ \Ninf = \N\cup \{\infty\} $ and $ \Zinf = \Z\cup \{\infty, -\infty\} $.
The interval $ \closedopeninterval{a}{b} $ represents the set $ \{a, a+1, \cdots , b\} $. 

\subsection{Two-Player Games}	
A two-player game graph is a pair $G= (V, E)$, where $V= V_0 \uplus V_1$ is a 
finite set of nodes, $E \subseteq V \times V$ is a set of edges.
The nodes are partitioned into 
two sets, $V_0$ and $V_1$, where $V_i$ represents the set of nodes controlled by Player $i$ for $i \in 
\{0,1\}$. 
Further, we write $E_i$ to denote the set of edges originating from nodes in $V_i$, i.e., $E_i = E \cap (V_i \times V)$.
Given a node $v$, we write $E(v)$ to denote the set $\{e \in E \mid e = (v, v') \text{ 
for } v' \in V\}$ of all outgoing edges from $v$.

\smallskip
\noindent\textbf{Value Functions.} 
For a set of nodes $V$, and a set of edges $E$ let $ \funcV $ denote $ \{ f \mid f: V\rightarrow \Ninf\} $, and $ \funcE $ denote $ \{ f \mid f: E\rightarrow \Ninf\} $.

\smallskip
\noindent\textbf{Plays.}
A \emph{play} $\play = v_0 v_1 \ldots \in V^\omega$ on $G$ is an 
infinite sequence of nodes starting from $v_0$ such that, $(v_i, v_{i+1}) 
\in E$ for each $i$. We denote the $i^{th}$ node $ v_i $ of $\play$ as $\play[i]$ and 
use the notations $\play[0 ;i] = v_0 \ldots v_i$, $\play[i;j] = v_i \ldots v_j$, and $ \play[i;\infty] = 
v_i \ldots $ to denote a \emph{prefix},  \emph{infix}, and \emph{suffix} of $\play$, respectively. 
We write $v\in\play$ (resp.\ $e\in\play$) to denote that the node $v$ (resp. the edge $e$) appears in $\play$.
Furthermore, we write $v\in\inf(\play)$ (resp.\ $e\in\inf(\play)$) to denote that node $v$ (resp.\ edge $e$) appears infinitely often in the play $\play$.%
We denote by $\plays(G)$ the set of all plays on $G$, by $\plays(G,v)$ denote the set of all plays starting from node $v$.

\smallskip
\noindent\textbf{Strategies.} 
A \emph{strategy} $\strat$ for $\p{i}$, where $i \in \{0,1\}$ (or, a \emph{
$\p{i}$-strategy}) is a function  $\strat: V^* \cdot V_i \mapsto E$ such that for all $H \cdot v 
\in V^* \cdot V_i$, we have $\strat(H \cdot v) \in E(v)$. %
A play $\play = v_0 v_1 \ldots$ is called a 
$\strat$-play if it follows $\strat$, i.e., for all $j \in \N$, whenever $v_j \in V_i$ it holds that $\strat(v_0 \ldots v_j) = (v_j,v_{j+1})$. Given a strategy 
$\strat$, we write $\plays_{\strat}(G,v)$ to denote the set of all $\strat$-plays starting 
from node $v$ and $\plays_{\strat}(G)$ to denote the set of all $\strat$-plays in $G$. 
For an edge $e$, we write $\plays_{\strat}(G,e)$ to denote the set of plays that start with $e$ and follows $\strat$, i.e., a play $\play\in \plays_{\strat}(G,e)$ iff 
$(\play[0],\play[1]) = e$ and $\play[1; \infty] \in \plays_{\strat}(G,\play[1])$.

Let $M$ be a \emph{memory} set. A $ \p{i} $-strategy $\strat$ with memory $M$ is represented 
as a tuple $(M, m_0, \alpha, \beta)$, where $m_0 \in M$ is the initial memory value, $\alpha : M 
\times V \to M$ is the memory update function, and $\beta : M \times V_i \to V$ is the state transition 
function. Intuitively, if the current node is a $ \p{i} $ node $v$ and the current memory 
value is $m$, the strategy $\strat$ selects the next node $v' = \beta(m, v)$ and updates the memory to
$\alpha(m, v)$. If $M$ is finite, we call $\strat$ a \emph{finite-memory strategy}; otherwise, 
it is an \emph{infinite-memory strategy}. 
Formally, given a history $H\cdot v\in V^*\cdot V_i$, the strategy is defined as $\strat(H\cdot v) = 
\beta(\hat{\alpha}(m_0, H), v)$, where $\hat{\alpha}$ is the canonical extension of $\alpha$ to sequences of nodes.
A strategy is called \emph{memoryless} or \emph{positional} if $|M| = 1$. For a memoryless 
strategy $\strat$, it holds that $\strat(H_1 \cdot v) = \strat(H_2 \cdot v)$ for every history 
$H_1,H_2 \in V^*$. For convenience, we write
$\strat(v)$ instead of $\strat(H\cdot v)$ for such strategies. 

For a game graph $G = (V,E)$ with a finite-memory strategy $\strat = (M, m_0, \alpha, \beta)$, we denote by $G_\strat = (V',E')$ the product of $G$ and $\strat$, that is, $V_0' = V_0 \times M$, $V_1' = V_1\times M$, and $E' = \{((v,m),(v',m')) \mid (v,v') \in E, m' = \alpha(m,v)\}$. 
With slight abuse of terminology, we say that a state $v$ is \emph{reachable from $q$ in $G_\strat$} if 
there exists a tuple $(v,m')$ reachable from $(q, m_0)$ in $G_\strat$. Similarly, we say that a sequence $v_0 v_1 
\ldots$ is a play in $G_\strat$ if there exists a corresponding play $(v_0, m_0)(v_1, m_1) \ldots$ in $G_\strat$.

\smallskip
\noindent\textbf{Games and Objectives.} 
A \emph{game} is a tuple $(G, \varphi)$, where $G$ is a game graph and $\varphi \subseteq V^\omega$ is an \emph{objective} for $\pz$. 
A play $\play$ is considered \emph{winning} if $\play \in \varphi$.
A $\p{0}$ strategy $\strat$ is \emph{winning from a node $v$}, if all $\strat$-plays starting from $v$ are winning.
Similarly, $\strat$ is winning from $V'\subseteq V$ if it is winning from all nodes in $V'$.
We define the \emph{winning region} $\Win(G, \varphi)$ as
the set of nodes from which $\p{0}$ has a winning strategy in $(G, \varphi)$.
A $\p{0}$ strategy is \emph{winning} if it is winning from $\Win(G, \varphi)$.

We define a \emph{weight function} $\weight : E \to [-W;W]$ for some $W \in \N_{>0}$, which assigns an integer weight to each edge in $G$. This function extends naturally to finite infixes of plays, i.e., $\weight(v_0v_1\ldots v_k) = \sum_{i=0}^{k-1} \weight(v_i,v_{i+1})$. Furthermore, we define the average weight of a finite prefix $v_0v_1\ldots v_k$ as $\average(v_0v_1\ldots v_k) = \frac{1}{k} \sum_{i=0}^{k-1} \weight(v_i,v_{i+1})$.
With this, we consider the following objectives in games:\\
\begin{inparaitem}[$\triangleright$]
	\item \emph{(Quantitative) Energy Objectives}. 
	Given a weight function $\weight$ and an initial credit $c \in\N$, the \emph{energy objective} is defined as $\energy_c(\weight) = \{\play \in V^\omega \mid c+ \weight(\play[0;i]) \geq 0,\ \forall i \in \N\}$.
	Intuitively, the energy objective ensures that the total weight (`energy level') remains non-negative along a play.\\
	\item \emph{(Quantitative) Mean-Payoff Objectives.} 
	Given a weight function $\weight$, the \emph{mean-payoff objective} is defined as\footnote{We note that mean-payoff objectives can also be defined via the \emph{limit-inferior} function i.e., $\{\play \in V^\omega \mid \liminf_{n\to\infty} \average(\play[0;n]) \geq 0\}$. However, it has been shown that games with either definition are equivalent~\cite[Corollary 8]{gogbook}.} $\meanpayoff(\weight) = \{\play \in V^\omega \mid \limsup_{n\to\infty} \average(\play[0;n]) \geq 0\}$. 
	Intuitively, the mean-payoff objective ensures that the limit average weight of a play is non-negative.	\\
	\item \emph{(Qualitative) Parity Objectives}. Given a \emph{priority labeling} 
	$\priority: V \to [0;d]$ for some $d \in \N_{>0}$, which assigns a priority to each node in $G$, the \emph{parity objective} is defined as
	$\parity(\priority) = \{\play \in V^\omega \mid \max_{v\in\inf(\play)} \priority(v) \text{ is even}\}$. Intuitively, the parity objective ensures that the highest priority seen infinitely often along a play is even.
	
\end{inparaitem}
We refer to a game with a mean-payoff, energy, or parity objective as a \emph{mean-payoff game}, \emph{energy game}, and \emph{parity game}, respectively. A game is called \emph{mixed} if it is equipped with a conjunction of  quantitative and qualitative objectives. Further, we call $G$ weighted, if it is annotated with a weight function $\weight$, denoted by $G_\weight$. Consequently, energy and mean-payoff are referred to as weighted games.

\smallskip
\noindent\textbf{Fixed and Unknown Initial Credit Problem.}
We consider the following game variants for energy objectives.
\begin{inparaenum}[\bfseries (1)]
	\item Given an initial credit $c$, the \emph{energy game with fixed initial credit $c$} is defined as the game $(G, \energy_c(\weight))$.
	\item A game $(G,\energy(\weight))$ with \emph{unknown initial credit} asks $\pz$ to ensure the objective $\energy_c(\weight)$ for some finite initial credit $c$.
\end{inparaenum}

In an energy game $(G = (V,E),\energy(\weight))$, there exists an optimal initial credit $\opt(v)\in\Zinf$ for each node $v$, where $\opt(v)$ is the minimal value (in $\Ninf$) such that for every initial credit $c \geq \opt(v)$, there exists a winning strategy from $v$ in the game $(G, \energy_c(\weight))$. We use $\opt\in\funcV$ to denote this \emph{optimal value function} which assigns the optimal initial credit to each node in the game graph.
It is well-known that the optimal initial credit is upper bounded by $c^* = W\cdot |V|$, where $W$ is the maximum weight in the weight function $\weight$.
Hence, for any initial credit $c\geq c^*$, the winning region for energy game $(G,\energy(\weight))$ with unknown initial credit is the same as the winning region for energy game $(G,\energy_{c}(\weight))$. Moreover, every winning strategy in $(G,\energy_{c}(\weight))$ is also winning in $(G,\energy(\weight))$.

\section{Quantitative Strategy Templates (QaSTels)}\label{sec:Templates}

In this section, we first define a quantitative strategy template (QaSTel) for weighted games and show how it can be used to represent the set of strategies in a weighted game. We then define winning and maximally permissive QaSTels. 

\begin{definition}[Quantitative Strategy Template (QaSTel)]
	Given a game graph $G = (V,E)$, a \emph{QaSTel} %
	for $ \pz $ is a function $ 
	\energyTemplate:V_0\times \Ninf \rightarrow 2^E $ that maps a $ \pz $ node $u$ and the 
	current credit $c$ to a subset of outgoing 
	edges of $u$ in $G$ that are \emph{activated} by $c$ s.t.\ $\energyTemplate(u, c) \subseteq \energyTemplate(u, c')$ for all $c'\geq c$.
\end{definition}
We also use a QaSTel $\energyTemplate$ as a function $V_0\times \Zinf \rightarrow 2^E$ by extending it to negative credits as follows: $ \energyTemplate(v, c) = \emptyset $ for all $ c < 0 $.
If $ \energyTemplate(u, i) = \energyTemplate(u, i+1) = \cdots = \energyTemplate(u, j) = E' $, then for notational simplicity, we will write $ \energyTemplate(u, [i; j]) = E' $. 
This naturally defines the \emph{activation function} for an edge $e$, denoted by $\activation_\energyTemplate(e)$, as the smallest value $k$ such that $e\in \energyTemplate(u, [k; \infty])$.

Given a weighted game $G_\weight$ with a QaSTel $\energyTemplate$ and a weight $c \in\N$, a play $ \play = v_0v_1\cdots $ is said to be a $(\energyTemplate,c)$-play if 
there exists a $k\in\Ninf$ such that for all $i\in [0;k]$ 
with $ v_i\in V_0 $, $ (v_i,v_{i+1})\in 
\energyTemplate(v_i, c + \weight(\play[0;i])) $ and if 
$k\neq \infty$, then %
whenever $v_{k+1}\in V_0$ it holds that
$\energyTemplate(v_{k+1}, c + \weight(\play[0;k+1])) = 
\emptyset$.
Intuitively, either the play only uses the active edges from the QaSTel forever, or it reaches a node where no edge is active in the QaSTel and then the play continues with arbitrary edges.
We collect all $(\energyTemplate,c)$-plays in $G$ from a node $ v $ in the set $ \aplays_{\energyTemplate}(G,c,v) $.
Similarly, we write $ \aplays_{\energyTemplate}(G,c) $ to denote the set of all $(\energyTemplate,c)$-plays in $G$.

QaSTels define a set of ($\pz$) 
strategies in a weighted game which \emph{follow it}, as formalized next. 
\begin{definition}
Given an energy game 
	$(G,\energy_c(\weight))$ with initial credit 
	$c\in\N$, 
	a strategy $\strat$ is said to \emph{follow} a QaSTel $\energyTemplate$, denoted by $(G,\strat) 
	\vDash_{c}\energyTemplate$ (or simply $\strat \vDash_{c}\energyTemplate$ when $G$ 
is clear from the context),
if $ \plays_\strat(G)\subseteq \aplays_{\energyTemplate}(G,c)$.
	Similarly, for a mean-payoff game $(G,\meanpayoff(\weight))$, a strategy $\strat$ is said 
	to \emph{follow} a QaSTel $\energyTemplate$ if $(G,\strat) 
	\vDash_{c}\energyTemplate$ for some $c \geq W \cdot |V|$.
\end{definition}
For mean-payoff games, the previous definition chooses the initial credit $c$ to be at least the upper bound on the optimal credit, i.e., $c \geq W \cdot |V|$.
This is motivated by the fact that mean-payoff games are equivalent to energy games with unknown initial credit~\cite{fastermeanpayoffgames}. Therefore, winning strategies of mean-payoff games can be captured by winning strategies of energy games with credit above the upper bound on the optimal credit.

In the upcoming definition, we define a \emph{winning} QaSTel.
\begin{definition}[Winning QaSTel]
	Given a weighted game, a QaSTel $\energyTemplate$ is said to be \emph{winning} from a node $v$ (resp.\ a set $V'$ of nodes) if every strategy following $\energyTemplate$ is also winning from $v$ (resp.\ $V'$).
	Furthermore, a QaSTel $\energyTemplate$ is said to be \emph{winning} if it is winning from the winning region.
\end{definition}
A winning QaSTel is \emph{maximally permissive} if it includes all winning strategies. %
\begin{definition}[Maximal Permissiveness]
	Given a weighted game, a QaSTel $ \energyTemplate $ is said to be \emph{maximally permissive} if every winning strategy follows $\energyTemplate$.
	Furthermore, a QaSTel $ \energyTemplate $ is said to be \emph{$\finite$-maximally permissive} if every winning strategy with finite memory follows $\energyTemplate$.
\end{definition}

Given the simple and local structure of QaSTels, one can easily extract a positional strategy for 
$\pz$ following the QaSTel by 
picking the edge with the smallest activation value at every node. This clearly results in a 
winning strategy if the QaSTel is 
winning. %
\begin{restatable}{proposition}{restatePextractStrategy}\label{prop:extractStrategy}
	Given a weighted game graph $G_w$ s.t.\ $G = (V,E)$ with a QaSTel $\energyTemplate$, a positional strategy $\strat$ following $\energyTemplate$ can be extracted in time $\bigO(\abs{E})$.
	Let $\extract(G,\weight,\energyTemplate)$ be the procedure extracting this strategy.
\end{restatable}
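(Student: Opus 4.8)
The plan is to realize the greedy local rule from the discussion preceding the statement and then check that it meets the two requirements: that the extracted $\strat$ follows $\energyTemplate$, and that it is computable in linear time. Define $\strat$ by letting $\strat(u)$, for each $u\in V_0$, be an outgoing edge $e\in E(u)$ of minimal activation value $\activation_\energyTemplate(e)$, with ties broken by a fixed total order on $E$. This is well defined because every node has at least one outgoing edge (plays are infinite), and it still makes sense when all outgoing edges of $u$ have activation value $\infty$. We let $\extract(G,\weight,\energyTemplate)$ be this construction; $\strat$ is positional by construction. For the running time, assuming $\energyTemplate$ is represented by its activation function (one value in $\Ninf$ per edge, as produced by our synthesis algorithms), a single scan over the outgoing edges of each $u\in V_0$ maintaining a running minimum computes $\strat(u)$; summing over all $\pz$ nodes this costs $\bigO(\sum_{u\in V_0}\abs{E(u)}) = \bigO(\abs{E})$.

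The heart of the correctness argument is a single monotonicity observation. From the definition of $\activation_\energyTemplate$ together with the property $\energyTemplate(u,c)\subseteq\energyTemplate(u,c')$ for $c'\geq c$, one obtains $\energyTemplate(u,c') = \{\, e\in E(u)\mid \activation_\energyTemplate(e)\leq c'\,\}$ for every $u\in V_0$ and $c'\in\Z$. In particular, $\energyTemplate(u,c')\neq\emptyset$ if and only if $\activation_\energyTemplate(\strat(u))\leq c'$, that is, if and only if $\strat(u)\in\energyTemplate(u,c')$: whenever the template is non-empty at a $\pz$ node under the current credit, the edge prescribed by $\strat$ is among the activated ones.

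Using this, I would establish the strong statement that $\strat \vDash_c \energyTemplate$ for \emph{every} $c\in\N$, which covers both energy games with any fixed initial credit and (taking any $c\geq W\cdot\abs{V}$) the mean-payoff semantics. Fix such a $c$ and a $\strat$-play $\play = v_0v_1\cdots$. Let $k\in\Ninf$ be the largest index such that every $\pz$ node $v_i$ with $i\leq k$ satisfies $\energyTemplate(v_i, c+\weight(\play[0;i]))\neq\emptyset$ (in the degenerate case where $v_0\in V_0$ already has empty template at credit $c$, the prefix condition below is vacuous and $\play$ is a $(\energyTemplate,c)$-play trivially). For each such $i$, since $\play$ follows $\strat$ we have $(v_i,v_{i+1}) = \strat(v_i)$, which by the observation lies in $\energyTemplate(v_i, c+\weight(\play[0;i]))$; and if $k\neq\infty$, then by maximality $v_{k+1}\in V_0$ with $\energyTemplate(v_{k+1}, c+\weight(\play[0;k+1])) = \emptyset$. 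These are exactly the conditions in the definition of a $(\energyTemplate,c)$-play, so $\plays_\strat(G)\subseteq\aplays_{\energyTemplate}(G,c)$, i.e.\ $\strat\vDash_c\energyTemplate$. If in addition $\energyTemplate$ is winning, then every strategy following it is winning by definition, so $\strat$ is a winning positional strategy.

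I do not anticipate a genuine obstacle: the statement is essentially unwound from the definitions. The only points requiring care are the bookkeeping around the escape index $k$ in the definition of a $(\energyTemplate,c)$-play — the cases $k=\infty$ and an already-empty template at $v_0$ — and invoking the monotonicity observation uniformly across all credit values, so that the one positional strategy $\strat$ works simultaneously for every $c$ and hence also for the mean-payoff reading of ``follows''.
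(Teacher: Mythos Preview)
Your proposal is correct and follows essentially the same approach as the paper: define $\strat(v)$ as an edge of minimal activation value and observe that whenever $\energyTemplate(v,c')$ is nonempty the chosen edge lies in it, then verify the $(\energyTemplate,c)$-play conditions for every $c$. The paper argues the key observation by contradiction inline rather than stating it as a separate monotonicity lemma, and it omits the running-time remark, but the substance is identical.
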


\begin{proposition}
	Given a weighted game with game graph $G_\weight$ and a winning QaSTel $\energyTemplate$, the strategy $\extract(G,\weight,\energyTemplate)$ is winning. 
\end{proposition}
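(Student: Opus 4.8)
The plan is to obtain the statement as an immediate consequence of Proposition~\ref{prop:extractStrategy} and the definition of a \emph{winning} QaSTel; there is essentially no heavy lifting here, and the only point that deserves a line of justification is why the extracted positional strategy really \emph{follows} $\energyTemplate$ in the sense required by the game at hand.

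First I would set $\strat := \extract(G,\weight,\energyTemplate)$ and recall from Proposition~\ref{prop:extractStrategy} that $\strat$ is a positional $\pz$-strategy following $\energyTemplate$. To see that following holds for \emph{every} admissible initial credit, note that at each $\pz$-node $u$ the strategy $\strat$ selects an outgoing edge $e^*$ of minimal activation value $\activation_\energyTemplate(e^*)$ among $E(u)$. Hence along any $\strat$-play, at a visited $\pz$-node $u$ with current credit $c'$ we have either $c' \geq \activation_\energyTemplate(e^*)$, in which case $e^* \in \energyTemplate(u,c')$, or $c' < \activation_\energyTemplate(e^*)$, in which case $\energyTemplate(u,c') = \emptyset$ because no outgoing edge of $u$ has activation value at most $c'$. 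In the first situation the play keeps using active edges; in the second it has reached a node with empty activation and may proceed arbitrarily, which is exactly the positional behaviour of $\strat$. Thus $\plays_\strat(G) \subseteq \aplays_\energyTemplate(G,c)$ for every $c\in\N$, so $\strat \vDash_c \energyTemplate$; in particular this gives $\strat \vDash_c \energyTemplate$ for some $c \geq W\cdot|V|$, which also covers the mean-payoff reading of ``follows''.

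Second, I would note that since $\energyTemplate$ is winning, by definition it is winning from the winning region $\Win(G,\varphi)$ --- that is, every strategy following $\energyTemplate$ is winning from $\Win(G,\varphi)$. Instantiating this with $\strat$ yields that $\strat$ is winning from $\Win(G,\varphi)$, hence $\extract(G,\weight,\energyTemplate)$ is a winning strategy, which is exactly the claim.

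I do not anticipate a genuine obstacle: the statement is a corollary of the preceding proposition together with the definitions. The only thing to be careful about is aligning the notion of ``follows'' used in Proposition~\ref{prop:extractStrategy} with the one appearing in the definition of a winning QaSTel --- in particular the role of the initial credit $c$ in the energy versus mean-payoff settings --- but, as argued above, the extracted strategy follows $\energyTemplate$ for all admissible $c$, so the two notions line up and the deduction goes through without any extra work.
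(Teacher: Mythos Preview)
Your proposal is correct and matches the paper's intended reasoning: the paper gives no separate proof for this proposition, treating it as an immediate consequence of Proposition~\ref{prop:extractStrategy} (the extracted strategy follows $\energyTemplate$) together with the definition of a winning QaSTel (every following strategy is winning from the winning region). Your middle paragraph even reproduces the argument the paper uses in the appendix to establish Proposition~\ref{prop:extractStrategy} itself, so if anything you are slightly more explicit than necessary.
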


\section{Synthesizing QaSTels}
We now discuss the synthesis of QaSTels over weighted games. 
As QaSTels are defined on edges, we first introduce an edge-optimal value function and an edge-based value iteration algorithm for weighted games. Then, we show how to extract \emph{optimal} QaSTels from the edge-optimal value function.
Finally, we show that optimal QaSTels are winning and permissive.

\smallskip\noindent\textbf
{Edge-based Value Iteration.}
It is known that both energy games and mean-payoff games (with threshold $0$ as considered in 
objective $\meanpayoff(\weight)$) have the same value iteration 
algorithm~\cite{fastermeanpayoffgames}.
To simplify the presentation we therefore restrict the discussion to energy games.

Recall that in energy games, $\opt(v)$ is the minimal credit required to win the game from node $v$.
We extend this notion to $\optE(e)$ such that $\optE(e)$ is the minimal credit required to take the edge $e$ and win the energy game from the source node of $e$.
Formally, for some edge $e = (u,v)$, $\optE(e)$ is the minimal value (in $\Ninf$) such that for any initial credit $c\geq \optE(e)$, there exists a $\pz$ strategy $\strat$ with $\plays_{\strat}(G,e) \subseteq \energy_c(\weight)$.
To compute $\optE$, we extend the standard value iteration algorithm to an edge-based value iteration algorithm. %

Given a weighted game graph $G_\weight$, %
the standard value iteration algorithm computes the least fixed point of the operator $ \fixpointV: \funcV\rightarrow\funcV $ defined as:

\begin{equation}
	\fixpointV(\mu)(u) = 
			\begin{cases}
				\min\{(\mu(v) \ominus \weight(e)): e = (u, v)\in E  \},\text{ if } u\in V_0 \\
				\max\{(\mu(v) \ominus \weight(e)): e = (u, v)\in E  \},\text{ if } u\in V_1				
			\end{cases}
\end{equation}
where $ l \ominus w= \max(l-w,0) $. 
This fixed-point computation is initialized with an initial function $ \mu_{in}: V \rightarrow\Z $, and each value is upper bounded by $\abs{V}\cdot W$, i.e., once a value reaches $\abs{V}\cdot W +1$, we replace it by $\infty$.
Let us denote this procedure of fixed-point computation of an operator $\fixpoint$ starting from an initial function $\mu_{in}$ as $\computeFixpoint(G,\weight,\fixpoint,\mu_{in})$.
Then, the optimal value function $\opt$ can be obtained by $\computeFixpoint(G,\weight,\fixpointV,\mu_{0})$ where $\mu_{0}(v) = 0$ for all $v\in V$.

To compute the edge-optimal value function $\optE$, we modify the value iteration algorithm by extending the operator $\fixpointV$ from functions over vertices to functions over edges. Hence, we define $ \fixpointE: \funcE\rightarrow\funcE $ for an edge $e = (u,v)$ as:
\begin{equation}
	\fixpointE(\mu)(e) = 
	\begin{cases}
		\min\{\mu(e') \ominus w(e): e'\in E(v) \},\text{ if } v\in V_0 \\
		\max\{\mu(e')\ominus w(e): e'\in E(v) \},\text{ if } v\in V_1.
	\end{cases}
\end{equation}

\begin{remark}\label{rem:optVoptE}
It is not hard to see that the operators $ \fixpointE $ and $ \fixpointV $ are closely related.
In particular, if $\mu_i^V\in\funcV$ and $\mu_i^E\in\funcE$ are the corresponding value 
functions obtained in the $i$-th 
iteration of $\fixpointV$ and $\fixpointE$ respectively, then $\mu_i^E(e) = 
\mu_i^V(v)\ominus\weight(e)$ for every edge 
$e=(u,v)$.
	This leads to a similar relation between $\opt$ and $\optE$, and hence,
	one can also obtain the optimal QaSTel using the standard node-based value iteration algorithm.
	However, our choice of presenting the edge-based approach 
	allows us to explain our idea better, at no additional cost. %
\end{remark}

With \cref{rem:optVoptE}, the following theorem directly follows from the properties 
of the standard value iteration algorithm.
\begin{theorem}\label{thm:edgeOptimal}
	Given a game graph $G = (V,E)$ and weight function $\weight : E \rightarrow [-W,W]$, the fixed-point $\computeFixpoint(G,\weight,\fixpointE,\mu_{0})$ is the edge-optimal value function $\optE$ and can be computed in time $\bigO(\abs{V}\cdot\abs{E}\cdot W)$.
\end{theorem}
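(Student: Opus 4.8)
The plan is to leverage \cref{rem:optVoptE}, which asserts the exact relationship $\mu_i^E(e) = \mu_i^V(v) \ominus \weight(e)$ for every edge $e = (u,v)$ between the $i$-th iterates of $\fixpointE$ and $\fixpointV$. Given this, the theorem reduces to two claims: (i) correctness, i.e., the least fixed point of $\fixpointE$ computed by $\computeFixpoint$ equals $\optE$, and (ii) the runtime bound $\bigO(\abs{V}\cdot\abs{E}\cdot W)$.

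For correctness, I would first establish \cref{rem:optVoptE} itself by a straightforward induction on $i$: the base case holds since both iterations are initialized with $\mu_0 \equiv 0$ (so $\mu_0^E(e) = 0 = \mu_0^V(v) \ominus \weight(e)$ when $\weight(e) \ge 0$, and one checks the clamping convention handles negative weights consistently — this is a point to state carefully), and the inductive step is a direct substitution of the hypothesis into the definitions of $\fixpointE$ and $\fixpointV$, using that the $\min$/$\max$ in $\fixpointE(\mu)(e)$ ranges over $E(v)$ exactly as the $\min$/$\max$ in $\fixpointV(\mu)(v)$ ranges over successors of $v$. Taking $i \to \infty$ (the fixed point is reached in finitely many steps since values are bounded by $\abs{V}\cdot W + 1$ and monotone, by the standard theory of value iteration for energy games invoked before the theorem) gives $\optE(e) = \opt(v) \ominus \weight(e)$. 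Then I would verify that this coincides with the intrinsic definition of $\optE(e)$ as the minimal credit $c$ for which some $\pz$-strategy $\strat$ satisfies $\plays_\strat(G,e) \subseteq \energy_c(\weight)$: from credit $c$ at $u$, taking edge $e=(u,v)$ leaves credit $c - \weight(e)$ (clamped appropriately) at $v$, and playing optimally thereafter requires exactly $c - \weight(e) \ge \opt(v)$, i.e., $c \ge \opt(v) + \weight(e)$; reconciling this with the $\ominus$-clamping (the cases $\weight(e) < 0$ versus $\weight(e) \ge 0$) is the one genuinely delicate bookkeeping step.

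For the runtime, I would argue as in standard value iteration: each value in $\funcE$ is a non-decreasing integer sequence bounded by $\abs{V}\cdot W + 1$, so each of the $\abs{E}$ entries changes at most $\abs{V}\cdot W$ times; using the usual work-list implementation (when $\mu(e')$ increases, re-evaluate only the $\fixpointE$-updates that depend on $e'$), the total work is proportional to the number of value changes times the local out-degree, giving $\bigO(\abs{V}\cdot\abs{E}\cdot W)$.

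The main obstacle I anticipate is not conceptual but the careful handling of the $\ominus$ truncation in the correspondence $\optE(e) = \opt(v)\ominus\weight(e)$: showing that the clamped-at-zero value iteration faithfully computes the true minimal initial credit (rather than an artifact of truncation) requires invoking the soundness of the node-based algorithm for energy games — which the excerpt cites as "well-known" and whose upper bound $c^* = W\cdot\abs{V}$ is stated in the preliminaries — and then transporting it along \cref{rem:optVoptE}. Everything else is routine.
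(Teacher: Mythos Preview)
Your proposal is correct and matches the paper's approach exactly: the paper gives no detailed proof, stating only that the theorem ``directly follows from the properties of the standard value iteration algorithm'' via \cref{rem:optVoptE}, which is precisely the reduction you outline. Your flagged concern about the $\ominus$-truncation in the base case is justified---the invariant of \cref{rem:optVoptE} actually holds with an index shift, $\mu_{i+1}^E(e) = \mu_i^V(v)\ominus\weight(e)$, rather than as literally stated---but this is immaterial at the fixed point and your plan to verify $\optE(e) = \opt(v)\ominus\weight(e)$ against the intrinsic definition handles it.
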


	Given a weighted game graph $G_\weight$, the winning region $\win$ for both mean-payoff and energy games with unknown initial credit can be extracted from the edge-optimal value function $\mu = \optE$ as 
	\begin{subequations}\label{equ:optimalToWinning}
	 \begin{equation}
	  \win := \{v\in V_0 \mid \exists e\in E(v).\ \mu(e) \neq \infty\} \cup \{v\in V_1 \mid \forall e\in E(v).\ \mu(e) \neq \infty\}.
	 \end{equation}
	Furthermore, for energy games with initial credit $c$ we obtain the winning region
	\begin{equation}
	 \win_c := \{v\in V_0 \mid \exists e\in E(v).\ \mu(e) \leq c\} \cup \{v\in V_1 \mid \forall 
	 e\in E(v).\ \mu(e) \leq c\}.
	\end{equation}
	\end{subequations}

\smallskip\noindent\textbf
{QaSTel Extraction.}
Given an edge function $\mu\in\funcE$, we can extract a QaSTel $\energyTemplate$ from $\mu$ as follows.
For every node $u\in V$ and credit $k\in \Ninf$, $\energyTemplate(u,k)$ defines the set of edges that can be taken from $u$ with credit $k$, i.e.,
\begin{equation}\label{equ:eTemp}
 \energyTemplate(u,k) := \{e\in E(u) \mid k\geq \mu(e)\}.
\end{equation}
Intuitively, in an energy game, the QaSTel in \eqref{equ:eTemp} allows taking an edge $e$ whenever its feasible w.r.t. edge function $\mu$, i.e., the current energy is more than the edge value $\mu(e)$.
We call the QaSTel in \eqref{equ:eTemp} \emph{optimal} for the weighted game graph $G_\weight$ if $\mu$ is the edge-optimal value function $\optE$.
Given an initial edge function $\mu_{in}$, we write $\energyTemp(\gamegraph,\weight,\mu_{in})$ to denote the procedure that computes the fixed-point $\mu = \computeFixpoint(G,\weight,\fixpointE,\mu_{in})$ and returns the corresponding winning region (as in \eqref{equ:optimalToWinning}) and the corresponding QaSTel obtained from the $\mu$ (as in \eqref{equ:eTemp}).
This means, the optimal QaSTel can be obtained by the procedure $\energyTemp(\gamegraph,\weight,\mu_{0})$ (where $\mu_{0}$ is the zero function on edges).
An example of the computation of the optimal QaSTel is shown in \Cref{fig:Example}.

\begin{figure}[t]
    \centering
    \begin{minipage}{0.45\textwidth}
        \centering
        \begin{tabular}{|c|cccccccc|}
			\hline
			\multicolumn{9}{|c|}{~Edge-based Value Iteration~~}\\
            \hline
            - & $e_1$ & $e_2$ & $e_3$ & $e_4$ & $e_5$ & $e_6$  & $e_7$     & $e_8$     \\
            \hline
			$\mu_0$  &  0     & 0     & 0     & 0     & 0     & 0      & 0         & 0     \\\cline{1-9} 
			$\mu_1$  &  0     & 2     & 5     & 2     & 0     & 0      & 0         & 1     \\\cline{1-9} 
			$\mu_2$  &  0     & 2     & 5     & 2     & 0     & 1      & 0         & 2     \\\cline{1-9} 
			$\mu_3$  &  0     & 2     & 5     & 2     & 0     & 2      & 0         & 3     \\\cline{1-9} 
			$\mu_4$  &  0     & 2     & 5     & 2     & 0     & 3      & 0         & 4     \\\cline{1-9} 
				     &  	  &       &       &       &       &        &           & \vdots\\\cline{1-9} 
			$\mu_{15}$  &  0     & 2     & 5     & 2     & 0     & 14      & 0         & $ 15 $    \\\cline{1-9} 
			$\mu_{16}$  &  0     & 2     & 5     & 2     & 0     & 15      & 0         & $\infty $    \\\cline{1-9} 
			$\mu_{17}$  &  0     & 2     & 5     & 2     & 0     & $ \infty $ & 0         & $ \infty $  \\
            \hline 
        \end{tabular}%
    \end{minipage}
	\hfill
	\begin{minipage}{0.45\textwidth}
		\begin{tikzpicture}
			\hspace{-1cm}
			\node[player0] (0) at (-2.5,0) {$a$};
			\node[player0] (1) at (0, 0) {$b$};
			\node[player1] (2) at (2.5, 0) {$c$};
			
			\path[->] (0) edge[loop above] node {$ e_1 = +1 $} (0) 
						  edge[loop below] node {$ e_2 = -2 $} (0) 
						  edge[bend left = 20] node {$ e_3 = -5 $} (1);
			\path[->] (1) edge[bend left = 20] node {$ e_4 = -2 $} (0) 
						  edge[loop above] node {$ e_5 = +1 $} (1) 
						  edge[bend left = 20] node {$ e_6 = 0 $} (2);
			\path[->] (2) edge[bend left = 20] node {$ e_7 = 0 $} (1)
						  edge[loop above] node {$ e_8 = -1 $} (2);
	\end{tikzpicture}
        \centering
        \begin{eqnarray*}
			\vspace{-1cm}
			(a, \closedopeninterval{0}{2}) &\mapsto& \{e_1\}\\
			(a, \closedopeninterval{2}{5}) &\mapsto& \{e_1, e_2\}\\
			(a, \closedopeninterval{5}{\infty}) &\mapsto& \{e_1, e_2, e_3\}\\
			(b, \closedopeninterval{0}{2}) &\mapsto& \{e_5\}\\
			(b, \closedopeninterval{2}{\infty}) &\mapsto& \{e_5,e_4\}\\
		\end{eqnarray*}
		
    \end{minipage}%
	\caption{Example of an energy game (right top) with the computation for the edge-based value iteration (left) and the optimal QaSTel (right bottom).}
	\label{tab:complex example computation}\label{fig:Example}
\end{figure}

\smallskip\noindent\textbf
{Winning and Maximally Permissive QaSTels.}\label{sec:winningTemplates}
We now show that optimal QaSTels are winning for weighted games, and ($\finite$-)maximally 
permissive for (mean-payoff) energy games.
As a play defined by an optimal QaSTel only takes an edge if the credit is higher than its edge-optimal value, it is winning in the energy game.
Furthermore, the equivalence of energy and mean-payoff games gives the following result.
\begin{restatable}{theorem}{restateWinning}\label{thm:Winning}
	Given a weighted game graph $G_\weight$, the optimal QaSTel $ \energyTemplate $ is winning in both the mean-payoff game $(G,\meanpayoff(\weight))$ and the energy game $(G,\energy_c(\weight))$ for every initial credit $c\in \N$.
\end{restatable}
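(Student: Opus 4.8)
The plan is to prove, by an inductive energy invariant, that every strategy following the optimal QaSTel keeps the running energy at least the optimal credit of the current vertex --- hence nonnegative --- along every play started in the winning region. The first ingredient is the link between $\optE$ and $\opt$: by \cref{rem:optVoptE} (passing to the fixed point of the two value iterations) one gets $\optE(u,v) = \opt(v)\ominus\weight(u,v)$ for every edge $e=(u,v)$; equivalently, for $c\in\N$, $c\geq\optE(u,v)$ iff $c+\weight(u,v)\geq\opt(v)$. Combined with the standard equivalence $v\in\win_c \Leftrightarrow \opt(v)\leq c$ and the characterisation of $\win_c$ in \eqref{equ:optimalToWinning}, this yields the two facts I use repeatedly: \textbf{(F1)} if $v\in V_0$ and $\opt(v)\leq c$, then $\energyTemplate(v,c)\neq\emptyset$ --- a winning $\pz$-vertex always retains an active edge (by the $V_0$-clause of \eqref{equ:optimalToWinning} together with \eqref{equ:eTemp}); and \textbf{(F2)} if $v\in V_1$ and $\opt(v)\leq c$, then $\optE(v,v')\leq c$, i.e.\ $\opt(v')\leq c+\weight(v,v')$, for \emph{every} successor $v'$ of $v$ (by the $V_1$-clause of \eqref{equ:optimalToWinning}).

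Now fix $c\in\N$, a strategy $\strat\vDash_c\energyTemplate$, and a $\strat$-play $\play=v_0v_1\cdots$ with $v_0\in\win_c$; write $c_i := c+\weight(\play[0;i])$. Since $\strat\vDash_c\energyTemplate$, we have $\play\in\aplays_{\energyTemplate}(G,c)$, so there is $k\in\Ninf$ with $(v_i,v_{i+1})\in\energyTemplate(v_i,c_i)$ at every index $i\leq k$ with $v_i\in V_0$, and, if $k$ is finite, $v_{k+1}$ is a $\pz$-dead-end ($\energyTemplate(v_{k+1},c_{k+1})=\emptyset$). I would prove by induction on $i\leq k+1$ that $c_i\geq\opt(v_i)$ --- in particular $v_i$ lies in the winning region and $c_i\geq 0$. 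The base case $i=0$ is $v_0\in\win_c$, so $c_0=c\geq\opt(v_0)$. For the step at index $i\leq k$: if $v_i\in V_0$, the template constraint gives $c_i\geq\optE(v_i,v_{i+1})$, hence $c_{i+1}=c_i+\weight(v_i,v_{i+1})\geq\opt(v_{i+1})$; if $v_i\in V_1$, the hypothesis $c_i\geq\opt(v_i)$ together with \textbf{(F2)} gives $c_{i+1}\geq\opt(v_{i+1})$ directly, whatever successor the play takes.

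It remains to close off the escape clause and conclude. If $k=\infty$, the invariant holds for every $i$, so $c_i\geq 0$ for all $i$, i.e.\ $\play\in\energy_c(\weight)$. If $k$ were finite, the invariant at index $k+1$ would give $\opt(v_{k+1})\leq c_{k+1}$ with $v_{k+1}\in V_0$; by \textbf{(F1)} this forces $\energyTemplate(v_{k+1},c_{k+1})\neq\emptyset$, contradicting that $v_{k+1}$ is a $\pz$-dead-end. Hence $k=\infty$, $\play$ is winning, and $\energyTemplate$ is winning in $(G,\energy_c(\weight))$ for every $c\in\N$. For the mean-payoff game, a strategy $\strat$ following $\energyTemplate$ satisfies $\strat\vDash_c\energyTemplate$ for some $c\geq W\cdot\abs{V}$; by the energy case just proved, every $\strat$-play from $\win_c$ satisfies $c+\weight(\play[0;i])\geq 0$ for all $i$, whence $\average(\play[0;n])\geq -c/n\to 0$ and therefore $\limsup_{n\to\infty}\average(\play[0;n])\geq 0$, i.e.\ the play lies in $\meanpayoff(\weight)$; since $\win_c$ with $c\geq W\cdot\abs{V}$ coincides with the mean-payoff winning region, $\energyTemplate$ is winning in $(G,\meanpayoff(\weight))$ as well.

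The step I expect to require the most care is precisely the escape clause in the definition of $(\energyTemplate,c)$-play: one must argue that a play starting in the winning region never ``falls off'' the template, and this is exactly what \textbf{(F1)} buys --- a $\pz$-vertex reached with credit at least its optimal credit always has an active edge, so the only admissible escape point, a $\pz$-dead-end, is never reached on the winning region. The remaining pieces --- the inductive invariant, the energy-to-mean-payoff implication, and the identification of the two winning regions --- are routine given the material already in the excerpt.
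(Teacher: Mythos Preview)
Your proof is correct. For the energy case you take a genuinely different route than the paper. The paper argues directly from the \emph{semantic} definition of $\optE$: assuming a finite escape index $k$, it uses $c_k\geq\optE(e_k)$ to conjure a winning strategy through $e_k$, from which it manufactures an edge that should be active at $u_{k+1}$, contradicting the dead-end; the final nonnegativity of $c_i$ is then read off the template constraint at $\pz$-vertices. You instead go through the \emph{algebraic} identity $\optE(u,v)=\opt(v)\ominus\weight(u,v)$ from \cref{rem:optVoptE} and prove the single invariant $c_i\geq\opt(v_i)$ by induction, with (F1)/(F2) handling the two owner cases. Your approach buys a cleaner treatment of $\po$-vertices (via (F2) the invariant propagates through $V_1$-moves without any strategy construction), and the invariant simultaneously yields $c_i\geq 0$ and $v_i\in\win_{c_i}$; the paper's approach is more self-contained in that it never invokes $\opt$ or \cref{rem:optVoptE}. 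For the mean-payoff half your reduction via $c\geq W\cdot\abs V$, the bound $\average(\play[0;n])\geq -c/n$, and the identification of $\win_c$ with the mean-payoff winning region coincides with the paper's argument.
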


As an optimal QaSTel allows every edge ensuring positive energy w.r.t. the current credit, it is maximally permissive in an energy game.

\begin{restatable}{theorem}{restateEnergyMaximal}\label{thm:energyGamePermissive}
	Given a weighted game graph $G_\weight$, the optimal QaSTel $ \energyTemplate $ is maximally permissive in the energy game $(G,\energy_c(\weight))$ for every $c\in \N$.
\end{restatable}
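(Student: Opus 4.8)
The plan is to establish the following strengthening of the claim: for every $c\in\N$, every winning strategy $\strat$ of $(G,\energy_c(\weight))$, and every $\strat$-play $\play=v_0v_1\cdots$ starting in the winning region $\win_c$, the play $\play$ is a $(\energyTemplate,c)$-play that never escapes (witness $k=\infty$); by the definition of ``follows'' this yields $\strat\vDash_c\energyTemplate$. (A $\strat$-play starting at a $V_0$-node $u\notin\win_c$ cannot be a $(\energyTemplate,c)$-play at all, since $\energyTemplate(u,c)=\emptyset$ there by \eqref{equ:eTemp}; as is customary for strategy templates, we therefore read maximal permissiveness as the statement confined to the winning region, which carries all its content.) Two facts from earlier are used throughout: the optimal QaSTel is $\energyTemplate(u,k)=\{e\in E(u)\mid k\geq\mu(e)\}$ with $\mu=\optE$ (see \eqref{equ:eTemp}), and $\optE(u,v)=\opt(v)\ominus\weight(u,v)=\max(\opt(v)-\weight(u,v),\,0)$ for every edge $(u,v)$, by \Cref{rem:optVoptE}.

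The core of the argument is a credit invariant along $\play$. Write $c_i:=c+\weight(\play[0;i])$ for the energy level after $i$ steps; this is always finite. The claim is that $c_i\geq\opt(v_i)$ for every $i\in\N$. To prove it, fix $i$ and let $\strat_i$ be the residual of $\strat$ after the prefix $\play[0;i]$, i.e.\ the strategy whose plays from $v_i$ are exactly the position-$i$ suffixes of the $\strat$-plays from $v_0$ that extend $\play[0;i]$. Since $\strat$ is winning from $v_0\in\win_c$, every such $\strat$-play keeps $c+\weight(\cdot)\geq 0$ on all of its prefixes, hence every $\strat_i$-play from $v_i$ keeps $c_i+\weight(\cdot)\geq 0$ on all of its prefixes, so $\strat_i$ is winning from $v_i$ in $(G,\energy_{c_i}(\weight))$. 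By the defining property of $\opt$ this forces $c_i\geq\opt(v_i)$; in particular, for $i=0$, membership $v_0\in\win_c$ already yields $c=c_0\geq\opt(v_0)$.

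It remains to check $\play$ against $\energyTemplate$ at each $V_0$-node. Take any $i$ with $v_i\in V_0$ and let $e=(v_i,v_{i+1})$ be the edge chosen by $\strat$, i.e.\ $\strat(\play[0;i])=e$. The invariant at step $i+1$ gives $c_i+\weight(e)=c_{i+1}\geq\opt(v_{i+1})$, and the invariant at step $i$ gives $c_i\geq\opt(v_i)\geq 0$; together these yield $c_i\geq\max(\opt(v_{i+1})-\weight(e),\,0)=\optE(e)=\mu(e)$, that is, $e\in\energyTemplate(v_i,c_i)=\energyTemplate(v_i,\,c+\weight(\play[0;i]))$. As this holds at every $V_0$-node visited by $\play$ (and vacuously if $\play$ visits none), $\play\in\aplays_\energyTemplate(G,c)$ with witness $k=\infty$, which finishes the proof. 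I expect the residual-strategy step to be the main obstacle: the key is that the energy objective is evaluated on running prefix sums, so a winning strategy stays winning -- in fact credit-optimal -- from every prefix position, and this is exactly what pins $c_i$ above $\opt(v_i)$ and hence above the edge-optimal value $\optE(e)$ of whatever edge $\strat$ selects; once that is in place the remaining $\ominus$-arithmetic is routine. The only other delicate point is the confinement to $\win_c$ noted above.
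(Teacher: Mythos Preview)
Your argument is essentially the paper's: both use residual strategies to show that the running credit $c_i$ dominates the optimal value at every position, whence the edge $\strat$ selects is always active in $\energyTemplate$. The paper phrases this directly via $\optE$ (arguing $c_i\geq\optE(e_i)$ from the definition of the edge-optimal value), while you go through $\opt$ and then invoke \Cref{rem:optVoptE}; this is cosmetic. The one substantive difference is the treatment of $\strat$-plays starting outside $\win_c$: the paper handles these in a separate case, observing that $\energyTemplate(u_0,c)=\emptyset$ and concluding the play is trivially a $(\energyTemplate,c)$-play (implicitly reading the definition so that the ``escape'' may occur already at position~$0$), whereas you read the literal definition as forbidding escape at position~$0$ and therefore confine the claim to plays from $\win_c$. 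Your reading is technically accurate and exposes an off-by-one wrinkle in the definition; the paper's reading is the intended one, and under it no restriction is needed.
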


Unlike in energy games, the optimal QaSTels are \emph{not} maximally permissive in mean-payoff games. However, it can capture all winning strategies with finite memory, i.e., it is $\finite$-maximally permissive.
To show this, we use the following property of finite memory winning strategies in mean-payoff 
games.%

\begin{restatable}{lemma}{restateLMeanpayoffFiniteMemory}\label{lemma:meanpayoffFiniteMemory}
	Let $(G,\meanpayoff(\weight))$ be a mean-payoff game with finite memory winning strategy $\strat$. Then there exists a weight bound $\bound{\strat}\in \N$ such that for every 
	$\strat$-play $\play$ from a node $v\in\win(G,\meanpayoff(\weight))$, it holds 
	that $\weight(\play[0;i])\geq -\bound{\strat}$ for all $i\in \N$.
\end{restatable}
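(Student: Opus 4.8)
The plan is to derive the bound from one structural fact: because $\strat$ has \emph{finite} memory and is winning, no ``loop'' that $\strat$ can produce has strictly negative total weight. Any play prefix then splits into such loops plus a short acyclic remainder, which yields the bound. Write $\strat = (M,m_0,\alpha,\beta)$ and, for a finite $\strat$-play prefix $v_0\ldots v_t$, call $\kappa_t := (v_t,\hat{\alpha}(m_0,v_0\ldots v_{t-1}))\in V\times M$ its \emph{configuration}; there are at most $\abs{V}\cdot\abs{M}$ of them. Since $\strat$'s move and memory update at step $t$ depend only on $\kappa_t$, while $\po$'s moves are unconstrained, I record once the following cut-and-paste remark: whenever a $\strat$-play prefix visits the same configuration at two steps $s<t$, one may either delete the infix between them or splice in arbitrarily many copies of it, in each case again obtaining a valid $\strat$-play (prefix).

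\textbf{Loops are non-negative.} I would first show that for every $\strat$-play $\play$ from a node $v\in\win(G,\meanpayoff(\weight))$ and all $i<j$ with $\kappa_i=\kappa_j$, one has $\weight(\play[i;j])\geq 0$. Assume $\weight(\play[i;j])<0$. By the remark, following $\play$ up to step $i$ and then repeating the infix $\play[i;j]$ indefinitely produces a $\strat$-play $\play'$ that starts from $v\in\win$. Splitting $n$ into a number of complete copies of this loop plus a bounded remainder gives $\lim_{n\to\infty}\average(\play'[0;n]) = \weight(\play[i;j])/(j-i) < 0$, hence $\play'\notin\meanpayoff(\weight)$ -- contradicting that $\strat$ is winning from $\win$.

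\textbf{From loops to the bound.} Now fix a $\strat$-play $\play$ from $v\in\win$ and $i\in\N$. Whenever the configuration sequence $\kappa_0,\ldots,\kappa_i$ contains a repetition $\kappa_s=\kappa_t$ with $s<t$, delete that infix; by the remark the result is a shorter $\strat$-play prefix from $v$, of total weight $\weight(\play[0;i])-\weight(\play[s;t])\leq\weight(\play[0;i])$ by the previous step. Iterating terminates (lengths strictly decrease) at a prefix with pairwise distinct configurations, hence with at most $\abs{V}\cdot\abs{M}-1$ edges and total weight at least $-(\abs{V}\cdot\abs{M}-1)\cdot W$. Since every deletion only decreased the total, $\weight(\play[0;i])\geq-(\abs{V}\cdot\abs{M}-1)\cdot W$, so $\bound{\strat}:=(\abs{V}\cdot\abs{M}-1)\cdot W\in\N$ works for all $i$ and all $\strat$-plays from the winning region.

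\textbf{Main obstacle.} The one delicate step is ``loops are non-negative'': one must check that the pumped infinite sequence is genuinely a $\strat$-play -- which is exactly where finiteness of $M$ is used, since the construction relies on $\strat$'s behaviour being a function of the current configuration and on $\po$ being free to repeat its past choices -- and that the $\limsup$ of the averages, under the paper's limsup-based definition of $\meanpayoff$, is strictly negative. The remainder is the standard observation that in a weighted graph all of whose reachable cycles are non-negative, the accumulated weight along any path is bounded below in terms of the number of vertices and $W$; equivalently, one could run the whole argument inside the finite product of $G$ with $\strat$.
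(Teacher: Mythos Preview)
Your proposal is correct and follows essentially the same approach as the paper: the paper works in the finite product graph $G_\strat$ (whose nodes are exactly your configurations), argues that every reachable simple cycle there has non-negative weight, and bounds any prefix by the weight of an acyclic part, obtaining $\bound{\strat}=2\abs{V'}\cdot W$. Your cut-and-paste decomposition is in fact a cleaner justification of that last step than the paper's informal ``initial acyclic part plus simple cycles'' phrasing, and it yields the slightly tighter constant $(\abs{V}\cdot\abs{M}-1)\cdot W$.
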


With the above lemma, one can see that every winning strategy $\strat$ in the mean-payoff game is a winning strategy in the energy game with initial credit $c = \max\{\bound{\strat},W\cdot\abs{V}\}$.
Combining this with \cref{thm:energyGamePermissive}, we get the following result.

\begin{restatable}{theorem}{restateTmeanpayoffGamePermissive}\label{thm:meanpayoffGamePermissive}
	Given a weighted game graph $G_\weight$, the optimal QaSTel $ \energyTemplate $ is $\finite$-maximally permissive in the mean-payoff game $(G,\meanpayoff(\weight))$.
\end{restatable}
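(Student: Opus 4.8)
The plan is to derive the statement by reducing to the energy-game case and invoking \cref{thm:energyGamePermissive}, with \cref{lemma:meanpayoffFiniteMemory} serving as the bridge between the two. Fix an arbitrary finite-memory winning strategy $\strat$ in the mean-payoff game $(G,\meanpayoff(\weight))$; by definition $\strat$ is winning from every node of $\win(G,\meanpayoff(\weight))$. \cref{lemma:meanpayoffFiniteMemory} supplies a bound $\bound{\strat}\in\N$ such that $\weight(\play[0;i])\geq-\bound{\strat}$ for every $\strat$-play $\play$ from a winning node and every $i\in\N$. Put $c := \max\{\bound{\strat},\,W\cdot\abs{V}\}$.

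First I would show that $\strat$ is a winning strategy in the energy game $(G,\energy_c(\weight))$. For any $v\in\win(G,\meanpayoff(\weight))$ and any $\strat$-play $\play$ from $v$, the bound gives $c+\weight(\play[0;i])\geq c-\bound{\strat}\geq 0$ for all $i$, so $\play\in\energy_c(\weight)$; hence $\strat$ is winning from $v$ in $(G,\energy_c(\weight))$. To conclude that $\strat$ is a \emph{winning strategy} of this energy game in the sense of the preliminaries, I need $\win(G,\meanpayoff(\weight))$ to coincide with the winning region $\win(G,\energy_c(\weight))$: this follows because mean-payoff games are equivalent to energy games with unknown initial credit and $c\geq W\cdot\abs{V}=c^*$, so $\win(G,\energy_c(\weight))=\win(G,\energy(\weight))=\win(G,\meanpayoff(\weight))$.

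Next I would apply \cref{thm:energyGamePermissive}: the optimal QaSTel $\energyTemplate$ is maximally permissive in $(G,\energy_c(\weight))$, so every winning strategy of that game follows $\energyTemplate$; in particular $(G,\strat)\vDash_c\energyTemplate$. Since $c\geq W\cdot\abs{V}$, this is precisely a witness for $\strat$ following $\energyTemplate$ according to the definition of the follow relation for mean-payoff games. As $\strat$ was an arbitrary finite-memory winning strategy, $\energyTemplate$ is $\finite$-maximally permissive in $(G,\meanpayoff(\weight))$.

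The only delicate point is establishing that the mean-payoff winning region $\win(G,\meanpayoff(\weight))$ --- from which we have just argued that $\strat$ wins in the energy game $(G,\energy_c(\weight))$ --- actually equals $\win(G,\energy_c(\weight))$, so that $\strat$ qualifies as a winning strategy of that energy game and \cref{thm:energyGamePermissive} is applicable; this is exactly where $c\geq c^{*}=W\cdot\abs{V}$ and the known equivalence of mean-payoff and unknown-initial-credit energy games are needed. Everything else is unwinding the definitions of $(\energyTemplate,c)$-plays and the $\vDash_c$ relation. It is worth noting that finiteness of memory enters only through \cref{lemma:meanpayoffFiniteMemory}, which fails for infinite-memory strategies: without a uniform energy bound $\strat$ need not be winning in $(G,\energy_c(\weight))$ for any finite $c$, which is precisely why maximal permissiveness does not hold for mean-payoff games in general.
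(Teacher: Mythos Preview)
Your proof is correct and follows essentially the same approach as the paper: take $c=\max\{\bound{\strat},W\cdot\abs{V}\}$, use \cref{lemma:meanpayoffFiniteMemory} together with the equality $\win(G,\meanpayoff(\weight))=\win(G,\energy_c(\weight))$ to show $\strat$ is winning in $(G,\energy_c(\weight))$, and then apply \cref{thm:energyGamePermissive}. Your write-up is in fact slightly more explicit than the paper's about why the winning regions coincide and why this is needed before invoking \cref{thm:energyGamePermissive}.
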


This shows that a winning and permissive QaSTel can be obtained by the procedure $\energyTemp(G,\weight,\mu_0)$, giving us the following result.
\begin{corollary}
	Given a weighted game graph $G_\weight$, a winning and maximally permissive QaSTel for the energy game $(G,\energy_c(\weight))$ can be computed in time $\bigO(\abs{V}\cdot\abs{E}\cdot W)$.
	Similarly, a winning and $\finite$-maximally permissive QaSTel for the mean-payoff game $(G,\meanpayoff(\weight))$ can be computed in time $\bigO(\abs{V}\cdot\abs{E}\cdot W)$.
\end{corollary}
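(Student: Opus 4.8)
The plan is to assemble the corollary directly from the machinery already developed in this section, so the proof is essentially a bookkeeping combination of earlier results. First I would invoke \Cref{thm:edgeOptimal}: running $\computeFixpoint(G,\weight,\fixpointE,\mu_0)$ yields the edge-optimal value function $\optE$ in time $\bigO(\abs{V}\cdot\abs{E}\cdot W)$. From $\optE$ I would extract the optimal QaSTel $\energyTemplate$ via \eqref{equ:eTemp}, i.e.\ $\energyTemplate(u,k) = \{e\in E(u)\mid k\geq \optE(e)\}$, which is exactly the procedure $\energyTemp(G,\weight,\mu_0)$. The extracted object is represented succinctly by storing the single number $\optE(e)$ for each edge $e$, and membership in $\energyTemplate(u,k)$ is then decidable in constant time; this post-processing costs only $\bigO(\abs{E})$ and hence does not affect the overall time bound.

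Next I would plug in the correctness results that have already been proved. \Cref{thm:Winning} gives that this optimal QaSTel is winning in both the mean-payoff game $(G,\meanpayoff(\weight))$ and the energy game $(G,\energy_c(\weight))$ for every initial credit $c\in\N$; \Cref{thm:energyGamePermissive} gives maximal permissiveness in $(G,\energy_c(\weight))$ for every $c$; and \Cref{thm:meanpayoffGamePermissive} gives $\finite$-maximal permissiveness in $(G,\meanpayoff(\weight))$. Combining the ``winning'' statement with the respective permissiveness statement yields, for the energy game, a QaSTel that is simultaneously winning and maximally permissive, and for the mean-payoff game, a QaSTel that is simultaneously winning and $\finite$-maximally permissive. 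Finally I would collect the time bounds: the value iteration of \Cref{thm:edgeOptimal} dominates at $\bigO(\abs{V}\cdot\abs{E}\cdot W)$ and the extraction adds only $\bigO(\abs{E})$, so both claimed complexities follow.

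I do not expect any genuine obstacle here, as everything reduces to invoking \Cref{thm:edgeOptimal,thm:Winning,thm:energyGamePermissive,thm:meanpayoffGamePermissive} in sequence. The only point that warrants a sentence of care is confirming that the extracted object is indeed a QaSTel in the sense of the definition --- in particular the monotonicity requirement $\energyTemplate(u,c)\subseteq\energyTemplate(u,c')$ for all $c'\geq c$ --- which is immediate from \eqref{equ:eTemp} since the threshold condition $k\geq\optE(e)$ is upward closed in $k$; and that the succinct edge-labelled representation is genuinely computable (and queryable) within the stated bound, which follows from the $\bigO(\abs{E})$ extraction cost noted above.
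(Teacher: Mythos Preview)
Your proposal is correct and matches the paper's approach: the corollary is stated without an explicit proof in the paper, being an immediate consequence of \Cref{thm:edgeOptimal} (for the complexity bound) together with \Cref{thm:Winning}, \Cref{thm:energyGamePermissive}, and \Cref{thm:meanpayoffGamePermissive} (for the winning and permissiveness properties), exactly as you assemble them. Your additional remarks on the $\bigO(\abs{E})$ extraction cost and the upward-closedness of \eqref{equ:eTemp} are sound and merely make explicit what the paper leaves implicit.
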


\section{Applications of QaSTels}
As discussed in the introduction, our study of QaSTels is inspired by the advantages their 
qualitative counterparts -- permissive strategy templates (PeSTels) for Parity games introduced 
in \cite{Anandetal_SynthesizingPermissiveWinning_2023} -- posses over classical strategies in 
control-inspired applications. In particular, PeSTels allow (i) to adapt winning strategies at 
runtime \cite{Anandetal_SynthesizingPermissiveWinning_2023,ContextTriggerdABCD_2023}, and 
(ii) to compose different templates into new ones leading to novel iterative and 
compositional synthesis techniques \cite{tacas24-equilibria,Anandetal_ContractBasedDistributedLogical_2024,AnandNS24}.
This section investigates whether QaSTels possess similar 
runtime adaptability (\cref{sec:LocalAdapt}) and compositional (\cref{sec:nocomp}) properties.

\subsection{Dynamic Strategy Extraction from QaSTels}\label{sec:LocalAdapt}
We first consider scenarios where the runtime operation of the controlled system, e.g.\ a robot, is 
supplied with local preferences over moves that can only be determined at runtime. As an example, 
consider a mobile robot in a smart factory operating in the presence of (non-modeled) human operators. 
Here, the robot might be equipped with a perception module which predicts the probability of the 
successful completion of an action (e.g.\ reaching a certain work station) within these (dynamic) 
obstacles. In this case, the logical control strategy can choose the activated move from the QaSTel 
with the highest success probability. More generally, given a weighted game $(G,\varphi)$ with optimal 
QaSTel $\energyTemplate$ and a dynamic preference function $\pref_t: E_0 \rightarrow [0,1]$ %
for every $t\in \mathbb{N}_0$, we define the $\pz$ strategy $\strat_0$ \emph{online} 
(after obtaining $\pref_t$ in time step $t$) s.t.\
\begin{equation}\label{equ:strat_adapt}
 \strat_0(v_0\hdots v_t):= \arg \max \{\pref_t(e) \mid e \in \energyTemplate(v_t,c_t)\},
\end{equation}
where $c_t$ is the credit value at time-step $t$. It follows directly from the correctness of optimal 
QaSTels that $\strat_0$ is winning for $(G,\varphi)$. %

A slightly more involved scenario occurs if edges with low preference values are assumed to be blocked 
and hence should not be taken at all by the controlled system. This can be due to a blocking static obstacle perceived by 
a mobile robot, or due to an actuation failure, e.g., a faulty motor in a quad rotor, resulting in a 
restricted evolution of the system and hence in the unavailability of certain $\pz$ moves in the 
game abstraction (which are henceforth assumed to be annotated with preference $0$). In this case, 
\eqref{equ:strat_adapt} changes to
\begin{equation}\label{equ:strat_adapt_faulty}
 \strat_0(v_0\hdots v_t):= \arg \max \{\pref_t(e)>\epsilon\mid e\in \energyTemplate(v_t,c_t)\},
\end{equation}
for some given $\epsilon>0$. Unfortunately, if $\strat_0(v_0\hdots v_t)$  
becomes empty, we cannot continue to control the system with the current template.
However, due to the permissiveness of optimal QaSTels (see~\cref{thm:energyGamePermissive,thm:meanpayoffGamePermissive}), 
such scenarios cannot occur if at least one edge with the minimal activation energy is 
always retained. Formally, we have the following observation.
\begin{proposition}\label{prop:correctnessOfMinAct}
 Given a weighted game graph $(G = (V,E),\weight)$ with an optimal QaSTel $\energyTemplate$, let 
 $E_t^*:=\{e\in E_0~|~\pref_t(e)<\epsilon\}$ for some $t\in \mathbb{N}_0$. If %
 there exists no $v\in V$ s.t.\ $\minActivationEdges(v)\subseteq E_t^* 
 $,%
 where
 \begin{equation}
 \minActivationEdges(v):= \arg\min\{\activation_\energyTemplate(e) 
	\mid e\in E(v) \},
\end{equation}
then $\energyTemplate = \energyTemp(G',\weight,\mu_0)$
 is the optimal QaSTel for $G'\setminus E_t^*$.
\end{proposition}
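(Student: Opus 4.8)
The plan is to reduce the whole statement to a single fact about the edge‑optimal value function: deleting the low‑preference edges $E_t^*$ does not change it on the edges that remain. Write $G^- := (V, E^-)$ with $E^- := E \setminus E_t^*$ for the game graph after deletion, put $\mu := \optE$ for the edge‑optimal value function of $G_\weight$, and let $\mu^- := \computeFixpoint(G^-,\weight,\fixpointE,\mu_0)$ be the edge‑optimal value function of $G^-$ (\cref{thm:edgeOptimal}), where $\fixpointE$ is understood with the option sets taken in $G^-$. I would prove $\mu^-(e) = \mu(e)$ for every $e \in E^-$. Since the optimal QaSTel is read off pointwise from its edge‑optimal value function by \eqref{equ:eTemp}, and the winning region by \eqref{equ:optimalToWinning}, this immediately gives that the optimal QaSTel of $G^-$ maps $(u,k)\mapsto\energyTemplate(u,k)\setminus E_t^*$ and has the same winning region as $G_\weight$; hence $\energyTemplate$ is still a winning optimal QaSTel on the reduced game, which is the assertion (the mean‑payoff reading following as usual from the equivalence with energy games with initial credit $\geq W\cdot\abs{V}$).

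To prove $\mu^- = \mu|_{E^-}$, first note that for the optimal QaSTel \eqref{equ:eTemp} gives $\activation_\energyTemplate(e) = \mu(e)$, so the hypothesis ``$\minActivationEdges(v)\not\subseteq E_t^*$ for all $v$'' says exactly that at every vertex $v$ at least one outgoing edge attaining $\min\{\mu(e')\mid e'\in E(v)\}$ survives in $E^-$; for $v\in V_1$ this is automatic since $E_t^*\subseteq E_0$, and in particular every vertex keeps an outgoing edge so $G^-$ is a well‑formed game graph. For $\mu^-\geq\mu|_{E^-}$: deleting Player‑$0$ edges only shrinks the option sets ($E^-(v)\subseteq E(v)$ for $v\in V_0$) and leaves Player‑$1$ ones unchanged ($E^-(v)=E(v)$ for $v\in V_1$), so, writing $\mu_n,\mu_n^-$ for the $n$‑th value‑iteration iterates of $\fixpointE$ from $\mu_0\equiv 0$ on $G$ and on $G^-$, an induction on $n$ — using monotonicity of $\fixpointE$ and ``a minimum over a subset is $\geq$ the minimum over the superset'' — yields $\mu_n^-(e)\geq\mu_n(e)$ for all $e\in E^-$, and passing to the limit gives $\mu^-\geq\mu|_{E^-}$. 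For $\mu^-\leq\mu|_{E^-}$: I claim $\mu|_{E^-}$ is already a fixed point of $\fixpointE$ on $G^-$. At $e=(u,v)$ with $v\in V_1$ nothing changed, so the fixed‑point identity for $\mu$ on $G$ carries over verbatim; at $e=(u,v)$ with $v\in V_0$ the hypothesis supplies some $e^\ast\in E^-(v)$ with $\mu(e^\ast)=\min\{\mu(e')\mid e'\in E(v)\}$, and since $\ell\mapsto\ell\ominus\weight(e)$ is monotone non‑decreasing, $e^\ast$ also minimizes $\mu(\cdot)\ominus\weight(e)$ over $E(v)$, so $\min_{e'\in E^-(v)}(\mu(e')\ominus\weight(e))=\min_{e'\in E(v)}(\mu(e')\ominus\weight(e))=\mu(e)$, i.e.\ $\fixpointE(\mu|_{E^-})(e)=\mu(e)$. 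Being a fixed point, $\mu|_{E^-}$ dominates the least fixed point $\mu^-$ reached by value iteration, so $\mu^-\leq\mu|_{E^-}$, and together with the previous inequality this gives $\mu^-=\mu|_{E^-}$.

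The only genuine content is the reverse inequality, and within it the one‑line observation that monotonicity of $\ell\mapsto\ell\ominus\weight(e)$ carries a $\mu$‑minimizer over $E(v)$ to a minimizer of $\mu(\cdot)\ominus\weight(e)$ over $E(v)$ — this is precisely where ``$\minActivationEdges(v)\not\subseteq E_t^*$'' enters, ruling out that a deleted edge is the unique witness for the value at $v$. The losing cases ($\mu(e)=\infty$) need no separate treatment, since $\infty\ominus\weight(e)=\infty$ and all minima above are taken in the extended ordered set $\Ninf$; the hypothesis still guarantees a surviving outgoing edge at each vertex, so $G^-$ stays a well‑formed game graph throughout.
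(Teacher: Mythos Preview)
Your proof is correct. The paper does not give a standalone proof of this proposition; it presents it as an observation justified by the permissiveness theorems (Theorems~\ref{thm:energyGamePermissive} and~\ref{thm:meanpayoffGamePermissive}). The implicit argument is game-theoretic: a positional strategy picking at each $v\in V_0$ a surviving edge from $\minActivationEdges(v)$ is optimal in $G$ and remains available in $G^-$, so the optimal values cannot increase; conversely, removing Player-$0$ edges cannot help Player~$0$, so they cannot decrease either. Your route is instead purely algebraic at the level of the fixed-point operator: you show that $\mu|_{E^-}$ is already a fixed point of $\fixpointE$ on $G^-$ (the hypothesis guarantees a surviving minimizer at each Player-$0$ successor, and monotonicity of $\ell\mapsto\ell\ominus\weight(e)$ transfers it), hence it dominates the least fixed point $\mu^-$, and you combine this with the easy monotonicity direction $\mu^-\geq\mu|_{E^-}$.

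Both arguments turn on the same core observation --- that a witness for the minimum at each $V_0$-node survives the deletion --- but yours is self-contained and does not invoke the permissiveness machinery. The only detailed argument the paper provides in this vicinity is the proof of the hot-starting corollary (Corollary~\ref{prop:correctnessOfHotstarting}), which establishes just the inequality $\optE\leq\optE'$ via the strategy-transfer route and then appeals to Knaster--Tarski; your fixed-point argument subsumes that step and additionally delivers the equality $\optE'=\optE|_{E^-}$ required for the proposition itself.
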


It follows from \Cref{prop:correctnessOfMinAct} that whenever there exists no $v\in V$ s.t.\
$\minActivationEdges(v)\subseteq E_t^* 
$ for all $t\in \mathbb{N}_0$, the $\pz$ strategy in 
\eqref{equ:strat_adapt_faulty} 
is winning in the original weighted game $(G,\varphi)$. 
Furthermore, if this condition is violated at some time point $t$, a recomputation of QaSTels 
can be triggered. As an obvious corollary (see \cref{sec:correctnessOfHotstarting}) of the 
known monotonicity properties of the value iteration algorithm, this recomputation can be 
hot-started from the current optimal value over $G$, i.e., %
\begin{equation}\label{eq:hotstarting}
  \energyTemplate':=\energyTemp(G',\weight,\mu_0) = \energyTemp(G',\weight,\activation_{\energyTemplate}).
\end{equation}

It should be noted that this dynamic recomputation of QaSTels might return an empty winning region at 
some time step, in which case the dynamically adapted strategy from \eqref{equ:strat_adapt_faulty} 
returns a finite play which is not winning anymore. %
Intuitively, such scenarios occur when preferences and QaSTels do not interact favorably. %
If preferences are due to unmodeled disturbances, such as dynamic obstacles, there is not much one can do to prevent such blocking situations. If additional objectives are, however, known and can be modelled as additional quantitative or qualitative objectives over the given game graph $G$, one should incorporate them into template synthesis as soon as they are available. This then leads to compositional synthesis approaches as discussed next.

\subsection{Composing QaSTels}\label{sec:nocomp}
The previous section has outlined the advantages of QaSTels for the local 
adaptation of strategies at runtime, which is in close analogy to the properties of PeSTels \cite{Anandetal_SynthesizingPermissiveWinning_2023}. Unfortunately, 
this section shows that QaSTels -- in contrast to PeSTels -- are not composable in a straightforward manner. That is,  given a QaSTel $\energyTemplate$ for a weighted game graph $(G,\varphi)$ and a QaSTel $\energyTemplate'$ for a different weighted game $(G,\varphi')$ over the same graph, we cannot easily combine $\energyTemplate$ and 
$\energyTemplate'$ into a QaSTel which is winning for the combined game $(G,\varphi\wedge\varphi')$.
This is due to the fact that a 
winning strategy for $(G,\varphi\wedge\varphi')$ might require infinite memory \cite{multimeanpayoff}.

Nevertheless, we can still extract a single (infinite-memory) winning strategy from multiple QaSTels over mean-payoff games as long as the winning regions of all games coincide. The resulting algorithm, given in \cref{alg:quadtemplate}, uses the function $\combine$ to combine winning strategies of all games extracted from QaSTels following the procedure given in  \cite[Lemma 8]{multimeanpayoff}.

\begin{algorithm}[t]
	\caption{$\textsc{CombineQaSTel}$}\label{alg:quadtemplate}
	\begin{algorithmic}[1]
		\Require Game graph $ \gamegraph=\tup{V, E}$ with $\{\meanpayoff(\weight_i)\}_{i\in[1;k]}$
		\Ensure Winning strategy $\strat$ for $(G,\bigwedge_{i\in[1;k]}\meanpayoff(\weight_i))$
		\State $\win=V$; $ \activation_{\energyTemplate_i} = \mu_0$ for all $i\in[1;k]$
        \State $(\win_i, \energyTemplate_i) \gets \energyTemp(\gamegraph, \weight_i, \activation_{\energyTemplate_i})$
        \State $\win' \gets \bigcap \win_i$
        \While{$ \win\neq\win' $}
        \ForAll{$ e\in \win'\times (\win\setminus\win') $} \do $\energyTemplate(e) = \infty$ \EndFor
        \State $(\win_i, \energyTemplate_i) \gets \energyTemp(\gamegraph, \weight_i, \activation_{\energyTemplate_i})$
        \State $\win\gets\win'$; $\win' \gets \bigcap \win_i$
		\EndWhile
		\State $\strat_i=\textsc{ExtractStrat}(\energyTemplate_i)$, for all $i\in[1;k]$
		\State\Return $ \combine(\{\strat_i\}_{i\in[1;k]})$
	\end{algorithmic}
\end{algorithm}

 \begin{restatable}{theorem}{restateMultiMeanPayoff}\label{thm:multiMeanPayoff}
 Given a game graph $G = (V,E)$ with multiple \emph{mean-payoff} objectives 
	$\{\meanpayoff(\weight_i)\}_{i\in[1;k]}$, $\textsc{CombineQaSTel}(G,\{\meanpayoff(\weight_i)\}_{i\in[1;k]})$ returns a winning strategy for the game $(G,\bigwedge_{i\in[1;k]}\meanpayoff(\weight_i))$. Furthermore, the procedure terminates in time $\bigO(k \cdot \abs{V}\cdot\abs{E}\cdot W)$, where $W$ is the maximal weight in the game. 
\end{restatable}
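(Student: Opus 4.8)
The plan is to split the claim into three parts: (a) the algorithm terminates, (b) when it terminates the common winning region $\win'$ equals the true winning region $\win(G,\bigwedge_i\meanpayoff(\weight_i))$, and (c) the strategy returned by $\combine$ is winning from that region; the running-time bound then follows by counting the iterations of the while-loop together with the cost of each call to $\energyTemp$.

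First I would establish termination and the running-time bound. Each iteration of the while-loop that does not exit strictly shrinks $\win$ (we only enter the body when $\win\neq\win'$ and then set $\win\gets\win'\subsetneq\win$), so the loop runs at most $\abs V$ times; however, I expect the sharper bound comes from observing that across all iterations the total work done by the $k$ edge-based value iterations is $\bigO(k\cdot\abs V\cdot\abs E\cdot W)$ \emph{in aggregate}, not per iteration — this is exactly the point of hot-starting via $\activation_{\energyTemplate_i}$ as in \eqref{eq:hotstarting} and \cref{sec:correctnessOfHotstarting}. By monotonicity of $\fixpointE$, restricting the graph (setting $\energyTemplate(e)=\infty$ on edges leaving the surviving region, which is what line~5 does) only increases edge values, so re-running $\energyTemp$ from the previous $\activation_{\energyTemplate_i}$ is sound and each edge value can only be raised; since every edge value lies in $[0;\abs V\cdot W]\cup\{\infty\}$ and is monotone across the whole execution, the amortized cost of all value-iteration calls for a fixed $i$ is $\bigO(\abs V\cdot\abs E\cdot W)$, giving the claimed $\bigO(k\cdot\abs V\cdot\abs E\cdot W)$ overall. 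I would cite \cref{thm:edgeOptimal} and \cref{rem:optVoptE} for the per-fixpoint cost and the hot-starting corollary for soundness of the incremental restarts.

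Next I would prove the fixpoint-on-exit characterization. When the loop exits, $\win=\win'=\bigcap_i\win_i$, and for each $i$, $\energyTemplate_i$ is the optimal QaSTel of the game $(G',\meanpayoff(\weight_i))$ obtained by removing all edges that leave $\win'$ (again by hot-starting correctness, \eqref{eq:hotstarting}). I claim $\win'=\win(G,\bigwedge_i\meanpayoff(\weight_i))$. The inclusion $\win(G,\bigwedge_i\cdots)\subseteq\win'$ is immediate: a strategy winning all objectives is in particular winning for each $\meanpayoff(\weight_i)$, so the winning region of the conjunction is contained in $\win_i$ for every $i$, hence in the intersection — and this is preserved under the iterations because removing edges that can only reach non-common-winning vertices never removes a winning strategy of the conjunctive game (those edges are useless for $\pz$ anyway). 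For the reverse inclusion, on the fixed point $\win'$ all $k$ games restricted to $G'$ have winning region exactly $\win'$, so by \cref{thm:Winning} each $\energyTemplate_i$ is winning from all of $\win'$ in $(G',\meanpayoff(\weight_i))$; since $\pz$ never needs to (and, in $G'$, cannot) leave $\win'$, these are genuinely winning in the subgame, which is the precondition for \cite[Lemma~8]{multimeanpayoff}.

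Finally, the correctness of $\combine$: invoke \cite[Lemma~8]{multimeanpayoff}, whose hypothesis is precisely a family of positional (or finite-memory) strategies $\strat_i=\textsc{ExtractStrat}(\energyTemplate_i)$, each winning for the energy game with credit above the optimal bound $W\cdot\abs V$ on a common region, to conclude that $\combine(\{\strat_i\}_i)$ is a single (infinite-memory) strategy winning for $\bigwedge_i\meanpayoff(\weight_i)$ from the common region $\win'$, which we just showed equals the true winning region. The main obstacle I anticipate is the bookkeeping in the middle step — carefully arguing that the edge-removal in line~5 both (i) never discards a strategy needed for the conjunctive game and (ii) eventually stabilizes to a region on which each per-objective template is winning \emph{within the restricted subgame} (so that plays cannot escape), since \cite[Lemma~8]{multimeanpayoff} is stated for games where all component strategies keep the play inside a shared winning region. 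Getting the interface between our QaSTel machinery and that lemma exactly right, and making the amortized running-time argument rigorous via monotonicity of $\fixpointE$, are the two places where care is required; the rest is routine.
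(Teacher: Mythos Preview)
Your proposal is correct and follows essentially the same three-part structure as the paper's proof: termination via the strictly shrinking region (bounded by $\abs{V}$ iterations) together with an amortized hot-starting argument for the $\bigO(k\cdot\abs{V}\cdot\abs{E}\cdot W)$ bound, the inductive containment $\win(G,\bigwedge_i\meanpayoff(\weight_i))\subseteq\win^j$ preserved because removed edges leave the conjunctive winning region, and finally correctness of the extracted strategies plus \cite[Lemma~8]{multimeanpayoff} for $\combine$. The one point the paper makes slightly more explicit and that you should spell out is \emph{why} only $\pz$ edges ever get removed in line~5 --- namely because each $\win_i$ is already closed under $\po$ moves (no $\po$ edge leaves a value-iteration winning region), which is what justifies invoking the hot-starting corollary and ensures no conjunctive winning strategy is discarded.
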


It should be noted that $\textsc{CombineQaSTel}$ can also be used for iterative 
synthesis, i.e., if a mean-payoff objective arrives, a new combined strategy can be 
derived by hot-starting $\textsc{CombineQaSTel}$.

\begin{remark}
 We remark that games with multiple energy objectives might not always have a winning strategy, even if the winning regions of the different energy objectives coincide. %
\end{remark}

\begin{remark}\label{rem:MeanPayoffParity}
 It is known that parity objectives can be translated into mean-payoff objectives with threshold $0$ over the same game graph in polynomial time \cite[Theorem 40]{gamesongraphsbook}. It therefore follows that the combination of QaSTels and PeSTels might require infinite memory strategies. This further implies that  $\textsc{CombineQaSTel}$ can also be used to extract combined strategies in (multi-objective) mean-payoff parity games.%
\end{remark}

\section{Combining QaSTels with (bounded) PeSTels}\label{sec:meanpayoffCobuchi}

While the previous section discussed control-inspired applications of QaSTels for runtime-adaptability and 
composition of templates for purely \emph{quantitative} objectives, this section considers the construction of 
strategy templates for games with both \emph{quantitative and qualitative} objectives. As \cref{rem:MeanPayoffParity} already 
shows that the most general combination, i.e., mean-payoff-Parity games, require infinite 
strategies in general, we cannot hope for the construction of winning templates which are maximal for the full 
class of parity, and hence, $\omega$-regular objectives over the weighted game graphs $G_\weight$. Instead, we 
propose to start with qualitative strategy templates which under-approximate the set of winning strategies for 
any parity objective, but allow for a straight forward combination with QaSTels, and hence for efficient 
incremental synthesis. These restricted qualitative templates are based on PeSTels from \cite{Anandetal_SynthesizingPermissiveWinning_2023}, which we 
recall in \cref{sec:pestel} before formalizing their composition with QaSTels in \cref{sec:mistels}. \cref{sec:MiSTelApp}
then discusses control-inspired mixed specifications where this class of templates ensure to capture a huge 
class of relevant strategies. We test the completeness and efficiency of the resulting algorithms in 
\Cref{sec:experiments}.

\subsection{Bounded PeSTels}\label{sec:pestel}
Within this paper we consider PeSTels which are composed of two edge conditions:  (i) \emph{unsafe} edges $\safegroup\subseteq E_0$, and (ii)  \emph{co-live} edges $\colivegroup\subseteq E_0$. %
Their combination $\template=(\safegroup,\colivegroup)$ is called a \emph{bounded} PeSTel, which represents the objective $\plays_\template(G) = \{\play \in V^\omega \mid \forall e\in\safegroup:\ e \not\in\play \text{ and } \forall e\in\colivegroup:\ e \not\in \inf(\play)\}$.
We say a strategy $\strat$ (for $\pz$) \emph{follows}  $\template$, denoted by $(G,\strat) \models \template$ (or simply $\strat\models\template$ when $G$ is clear from the context), if $\plays_{\strat}(G) \subseteq \plays_\template(G)$.
Intuitively, $\strat$ follows $\template$ if every $\strat$-play %
(i) never uses the unsafe edges in~$\safegroup$, and
(ii) stops using the co-live edges in~$\colivegroup$ eventually. %
In a qualitative game $(\gamegraph,\spec)$, a PeSTel $\template$ is \emph{winning} from a node $v$ if every strategy following $\template$ is also winning from $v$. 

In \cite{Anandetal_SynthesizingPermissiveWinning_2023}, PeSTels include a third edge condition, called live groups, which ensures that certain edges are taken infinitely often. We discuss in \cref{app:boundedPestels}, how winning PeSTels for games $(\gamegraph,\spec)$ synthesized via the algorithms presented in \cite{Anandetal_SynthesizingPermissiveWinning_2023} can be directly bounded leading to bounded PeSTels. We refer to the combined synthesis algorithm as $\computeTemp$.

\begin{remark}
 We note that $\computeTemp$ usually under-approximates the set of winning plays for $(\gamegraph,\spec)$. It is however known that for co-Büchi games no additional winning strategies can be captured by live-group templates, naturally leading to bounded templates. Notably, the algorithms for computing winning PeSTels in \cobuchi games, presented in~\cite{Anandetal_SynthesizingPermissiveWinning_2023}, exhibit the same worst-case computation time as standard methods for solving such (finite-state) games. %
\end{remark}

\subsection{Mixed Strategy Templates (MiSTels)}\label{sec:mistels}
Following the previous discussion, this section introduces MiSTels $\Strat = 
(\safegroup,\colivegroup,\energyTemplate)$ as a combination of a (bounded) PeSTel $\template = 
(\safegroup,\colivegroup)$ with a QaSTel $\energyTemplate$ defined over the same weighted game 
graph $G_\weight$. Thereby, MiSTels concisely represent a set of winning strategies for \emph{mixed games} $(G,\spec\wedge\varphi)$
which contain a quantitative objective $\spec$ and a qualitative objective $\varphi$. 
A $\pz$ strategy $\strat$ is said to follow the MiSTel $\Strat$ over $G_\weight$ if it follows both $\template$ and $\energyTemplate$ over $G_\weight$. %

\smallskip
\noindent\textbf{Conflict-free MiSTels.}
Given any combination of PeSTels and QaSTels, their direct combination might result in a MiSTel for 
which no strategy exists that follows it. As an example consider the discussion from \cref{sec:LocalAdapt} on the 
runtime adaptation of a strategy which follows a QaSTel but at the same time avoids taking 
unavailable edges. Given a PeSTel $(\safegroup,\colivegroup)$, the edge set $\safegroup$ can 
directly be interpreted as the set of unavailable edges, which -- in contrast to the case discussed 
in \cref{sec:LocalAdapt} -- does not change and is known a priori. Similarly, the set of $\colivegroup$ collects 
all edges that eventually become unavailable. Following \Cref{prop:correctnessOfMinAct}, we 
obtain the existence of a strategy following a MiSTel if $\minActivationEdges(v) 
\not\subseteq \safegroup\cup\colivegroup$ for all $v\in V$. If a MiSTel has this property, we call it \emph{conflict 
free}. %
Similar to \Cref{prop:extractStrategy}, one can extract a 
strategy following a conflict-free MiSTel by picking an unconstrained edge (i.e.\ $e\notin\safegroup\cup\colivegroup$) with the smallest activation value at every node.
\begin{restatable}{proposition}{restatePextractMixedStrategy}\label{prop:extractMixedStrategy}
	Given a weighted game graph $\gamegraph_\weight$ with a conflict-free MiSTel $ \Strat = (\safegroup,\colivegroup,\energyTemplate)$, a positional strategy following $ \Strat $ can be extracted in time $ \bigO(\abs{E}) $.
\end{restatable}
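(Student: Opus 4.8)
The plan is to reproduce the extraction of \cref{prop:extractStrategy}, but to restrict the admissible outgoing edges at each $\pz$ node to those left unconstrained by the PeSTel component. Concretely, I would define the positional $\pz$ strategy $\strat$ by
\[
 \strat(u) := \arg\min\{\activation_\energyTemplate(e) \mid e\in E(u)\setminus(\safegroup\cup\colivegroup)\}\quad\text{for every }u\in V_0,
\]
with ties broken arbitrarily. The first thing to check is that this is well defined, and this is exactly where conflict-freeness enters: since $\minActivationEdges(u)\not\subseteq\safegroup\cup\colivegroup$, there is an edge $e^\ast\in\minActivationEdges(u)$ with $e^\ast\notin\safegroup\cup\colivegroup$, so the set $E(u)\setminus(\safegroup\cup\colivegroup)$ over which the $\arg\min$ is taken is nonempty; moreover $\activation_\energyTemplate(\strat(u))\le\activation_\energyTemplate(e^\ast)=\min_{e\in E(u)}\activation_\energyTemplate(e)\le\activation_\energyTemplate(\strat(u))$, so $\strat(u)$ in fact attains the \emph{global} minimum activation value at $u$. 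This identity is the key fact that lets the rest of the argument go through as in \cref{prop:extractStrategy}.

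For correctness I would show $\plays_\strat(G)\subseteq\plays_\template(G)\cap\aplays_\energyTemplate(G,c)$ for every initial credit $c\in\N$, which by definition is precisely "$\strat$ follows the MiSTel $\Strat$". The PeSTel part is immediate: by construction $\strat$ never selects an edge of $\safegroup\cup\colivegroup$ at a $\pz$ node, and since $\safegroup,\colivegroup\subseteq E_0$ these are the only constrained edges, so no $\strat$-play ever traverses an edge of $\safegroup$ or of $\colivegroup$; hence the safety and co-liveness requirements defining $\plays_\template(G)$ hold vacuously. For the QaSTel part I would argue as in \cref{prop:extractStrategy}: along any $\strat$-play $\play=v_0v_1\cdots$ with running credit $c_i := c+\weight(\play[0;i])$, at each $\pz$ node $v_i$ we have $\energyTemplate(v_i,c_i)=\{e\in E(v_i)\mid\activation_\energyTemplate(e)\le c_i\}$ by monotonicity of the QaSTel, so either $c_i\ge\min_{e\in E(v_i)}\activation_\energyTemplate(e)=\activation_\energyTemplate(\strat(v_i))$, in which case $(v_i,v_{i+1})=\strat(v_i)\in\energyTemplate(v_i,c_i)$, or $c_i<\min_{e\in E(v_i)}\activation_\energyTemplate(e)$, in which case $\energyTemplate(v_i,c_i)=\emptyset$. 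Choosing the switching index $k$ in the definition of a $(\energyTemplate,c)$-play to mark the first $\pz$ position at which the template turns empty (and $k=\infty$ otherwise) then witnesses $\play\in\aplays_\energyTemplate(G,c)$. Combining the two parts gives that $\strat$ follows $\Strat$.

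For the running time, the extraction needs the activation values $\activation_\energyTemplate(e)$ (which come directly with the usual edge-based representation of a QaSTel) together with constant-time membership tests for $\safegroup\cup\colivegroup$; it then performs a single sweep over $E$, keeping, for each $u\in V_0$, the unconstrained outgoing edge of least activation value, which costs $\bigO(\abs{E})$. I expect the proof to be essentially routine once \cref{prop:extractStrategy} is in hand; the only point needing care is the interaction of the two components, namely that the edge retained by $\strat$ still realises the globally minimal activation value so that the emptiness argument of \cref{prop:extractStrategy} still applies — and this is exactly what conflict-freeness guarantees, in the spirit of \cref{prop:correctnessOfMinAct}. A secondary, purely bookkeeping, subtlety is picking the switching index $k$ so that the template is nonempty at every $\pz$ position up to $k$ and empty at position $k+1$.
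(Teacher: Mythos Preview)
Your proposal is correct and follows essentially the same approach as the paper: pick, at each $\pz$ node, an edge of minimal activation value that lies outside $\safegroup\cup\colivegroup$ (conflict-freeness guarantees one exists and that it attains the global minimum), then observe that the PeSTel constraints are vacuously satisfied and defer to the argument of \cref{prop:extractStrategy} for the QaSTel part. Your write-up is in fact more detailed than the paper's, which simply invokes the proof of \cref{prop:extractStrategy} for the QaSTel half and does not spell out the $\bigO(\abs{E})$ bound.
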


\smallskip
\noindent\textbf{Winning MiSTels.}
We say that the MiSTel $\Strat$ is \emph{winning} in the \emph{mixed game} 
$(G,\spec\wedge\varphi)$ from a node $v$ if all strategies $\strat$ that follow 
$\Strat$ are winning from $v$ in both the quantitative game $(G,\varphi)$ and the qualitative game 
$(G,\spec)$. %
\begin{algorithm}[t]
	\caption{$ \mixedTemp(\gamegraph, \weight,\spec) $}\label{alg:mixedtemplate}
	\begin{algorithmic}[1]
		\Require Mixed game $ (\gamegraph=\tup{V, E},\varphi\wedge\spec)$ with $\varphi = \meanpayoff(\weight)$ or $\energy(\weight)$
		\Ensure winning region $\win$, winning conflict-free MiSTel $\Strat$  
		\State $\activation_\energyTemplate = \mu_0$
        \State $(\win, \conflict,\Strat) \gets \FindConflicts(\gamegraph, \weight,\spec, \energyTemplate) $
        \While{$ \conflict \neq \emptyset $}
			\State $\spec \gets \spec \wedge \safetygame(\win)$\label{line:alg:mixedTemp:pestelAddSafety}
			\ForAll{$ e\in \conflict $} \do $\energyTemplate(e) = \infty$ \EndFor\label{line:alg:mixedTemp:qastelEdgeInf}
			\State $ (\win, \conflict,\Strat) \gets \FindConflicts(\gamegraph, \weight, \spec, \energyTemplate) $
		\EndWhile
		\State\Return $ (\win, \Strat) $
		
		\Statex
		
		\Procedure{\FindConflicts}{$ \gamegraph, \weight,\spec, 
		\energyTemplate $}
            \State $(\win_\spec,(\safegroup,\colivegroup)) \gets \computeTemp(\gamegraph, \spec)$\label{line:alg:mixedTemp:pestelCompute}
            \State $(\win_\varphi, \energyTemplate) \gets \energyTemp(\gamegraph, \weight, \activation_\energyTemplate)$\label{line:alg:mixedTemp:qastelCompute}
            \State $\win \gets \win_\spec\cap \win_\varphi$
            \State $\conflict = \cup_{v\in\win} \{\minActivationEdges(v) \mid \minActivationEdges(v)\subseteq \safegroup\cup\colivegroup\}$\label{line:alg:mixedTemp:conflictsCompute}
			\State \Return $ (\win,\conflict,(\safegroup,\colivegroup,\energyTemplate)) $
		\EndProcedure
	\end{algorithmic}
\end{algorithm}
In order to synthesize a winning MiSTel for a given mixed game $ (\gamegraph=\tup{V, E},\varphi\wedge\spec)$, we can therefore iteratively construct winning PeSTels and QaSTels and remove all conflicts from their joint winning region. An efficient way to do so is formalized in \cref{alg:mixedtemplate}. After QaSTel synthesis is initialized with~$\mu_0$, winning PeSTels and QaSTels are computed (\cref{line:alg:mixedTemp:pestelCompute,line:alg:mixedTemp:qastelCompute}) and conflicts in their joint winning region are detected (\cref{line:alg:mixedTemp:conflictsCompute}). If no conflict is detected, the algorithm directly terminates. Otherwise, conflicts are resolved by (i) adding an additional safety requirement to the quantitative specification $\spec$ (\cref{line:alg:mixedTemp:pestelAddSafety}), and (ii) increasing the edge-weight of conflicting edges, i.e., edges in $\minActivationEdges(v)\subseteq \safegroup\cup\colivegroup$, to $\infty$ (\cref{line:alg:mixedTemp:qastelEdgeInf}). After this, PeSTels and QaSTels are recomputed to resolve conflicts, which might result in new ones. 
If no more conflicts are generated, the algorithm terminates. 
The resulting MiSTel is winning and conflict free, as formalized next.

\begin{restatable}{theorem}{restateMixedTemplateThm}\label{thm:mixedTemplate}
	Let $\game = (\gamegraph,\varphi\wedge\spec)$ be a mixed game with $\varphi = 
	\meanpayoff(\weight)$ or $\energy(\weight)$ and $\spec$ a qualitative objective. Then, if $ 
	(\win, \Strat) =  \mixedTemp(\gamegraph,\weight,\spec)$, it holds that $ \Strat$ is a 
	conflict-free winning MiSTel from $\win$.
\end{restatable}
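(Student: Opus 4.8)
The plan is to establish three facts about the output $(\win,\Strat)$, where $\Strat=(\safegroup,\colivegroup,\energyTemplate)$: that $\mixedTemp$ terminates, that $\Strat$ is conflict free, and that every strategy following $\Strat$ is winning from $\win$ in $(\gamegraph,\varphi)$ and in $(\gamegraph,\spec)$ for the \emph{input} objective $\spec$. For termination I would track two monotone quantities across the iterations of the while-loop: the working specification only grows (we only conjoin $\safetygame(\cdot)$ on \cref{line:alg:mixedTemp:pestelAddSafety}), and the set of edges forced to $\infty$ on \cref{line:alg:mixedTemp:qastelEdgeInf} only grows --- the hot-started recomputation $\energyTemp(\gamegraph,\weight,\activation_{\energyTemplate})$ leaves an already-deleted edge deleted, cf.\ \eqref{eq:hotstarting} and \cref{prop:correctnessOfMinAct}. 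Moreover, whenever $\conflict\neq\emptyset$ at least one \emph{new} edge is deleted: every $e\in\conflict$ lies in $\minActivationEdges(v)$ for some $v\in\win=\win_\spec\cap\win_\varphi$ with $\minActivationEdges(v)\subseteq\safegroup\cup\colivegroup\subseteq E_0$, hence $v\in V_0\cap\win_\varphi$, so by \eqref{equ:optimalToWinning} some outgoing edge of $v$ has finite activation, so all min-activation edges of $v$ --- in particular $e$ --- have finite activation and were not yet deleted. Since $E_0$ is finite, the loop halts after at most $\abs{E_0}+1$ passes.

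For the winning claim, write $\spec'=\spec\wedge\bigwedge_j\safetygame(\win_j)$ for the specification held on exit, and note that the last call to \FindConflicts fixes $\win_\spec,\win_\varphi$ and $(\safegroup,\colivegroup),\energyTemplate$ as the winning regions and templates returned by $\computeTemp(\gamegraph,\spec')$ and $\energyTemp(\gamegraph,\weight,\activation_{\energyTemplate})$, with $\win=\win_\spec\cap\win_\varphi$. On the qualitative side, $(\safegroup,\colivegroup)$ is a winning PeSTel for $(\gamegraph,\spec')$, so every strategy following it is $\spec'$-winning from $\win_\spec\supseteq\win$, hence $\spec$-winning there because $\spec'\subseteq\spec$ as sets of plays; since following $\Strat$ entails following $(\safegroup,\colivegroup)$, the same holds for $\Strat$. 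On the quantitative side, deleting the edges sent to $\infty$ (all of which lie in $E_0$) yields a subgraph $G'$ of $\gamegraph$ for which $\energyTemplate$ is the \emph{optimal} QaSTel and $\win_\varphi$ its winning region; a strategy following $\Strat$ picks a QaSTel-activated edge at every $\pz$ node --- necessarily an edge of $G'$ --- and $\po$ moves are untouched by the deletion, so every one of its plays from $\win$ is a play of $G'$ that follows $\energyTemplate$ (it never reaches a node with empty activation, as discussed below), and therefore, by \cref{thm:Winning} applied to $G'$, wins $\varphi$ from $\win_\varphi\supseteq\win$; a $G'$-play winning for $\varphi$ is also a $\gamegraph$-play winning for $\varphi$. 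Together this gives that $\Strat$ is winning from $\win$. Conflict-freeness is immediate from the loop guard: on exit $\conflict=\emptyset$, i.e.\ $\minActivationEdges(v)\not\subseteq\safegroup\cup\colivegroup$ for all $v\in\win$, which suffices since --- using that $\win_\spec$ and $\win_\varphi$ are closed under $\po$ moves and that non-unsafe edges out of $\pz$ nodes stay inside $\win_\spec$ --- every $\Strat$-play from $\win$ remains in $\win$; \cref{prop:extractMixedStrategy} moreover supplies a positional strategy following $\Strat$.

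The step I expect to be the main obstacle is the quantitative one. First, one must justify that deleting the conflicting edges and hot-starting genuinely reproduces the optimal QaSTel of $G'$; this rests on monotonicity of the edge value iteration together with the observation that removing $\pz$ edges can only raise optimal credits, so the previously computed values remain a valid below-the-fixpoint initialization. Second, one must argue that a strategy following $\Strat$ from $\win_\varphi$ never reaches a node with empty activation, for otherwise the tail of the play could take a deleted edge and escape $G'$. This holds because the optimal QaSTel keeps the running credit adequate along every play issued from the winning region, so at each visited $\pz$ node the activation set is nonempty and the play stays a play of $G'$ for ever. Termination, the qualitative part, and conflict-freeness are comparatively routine once the accumulated specification $\spec'$ and the monotone deletion set are tracked.
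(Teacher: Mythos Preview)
Your proposal is correct and follows essentially the same approach as the paper: conflict-freeness from the loop guard, qualitative winning via soundness of $\computeTemp$ plus the observation that the accumulated specification $\spec'$ only strengthens $\spec$, and quantitative winning via the hot-starting/monotonicity argument (which the paper packages as \cref{prop:correctnessOfHotstarting}). Your write-up is in fact more careful than the paper's own proof on several points the paper leaves implicit --- you give an explicit termination argument, you spell out the subgraph $G'$ viewpoint and why a $\Strat$-play from $\win$ never reaches an empty-activation node, and you note that conflict-freeness is only checked on $\win$ and argue why this suffices via closedness of $\win$ under $\Strat$-plays.
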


\begin{remark}
   We note that the iterative computation of PeSTels and QaSTels in \cref{alg:mixedtemplate} 
   can be hot-started, which makes $\mixedTemp$ more efficient. For QaSTels, the correctness of 
   hot-starting follows from  \cref{eq:hotstarting}. Notably, $\computeTemp$ can also be 
   hot-started if specifications are added 
   \cite[Alg.4]{Anandetal_SynthesizingPermissiveWinning_2023}.
 \end{remark}

\smallskip
\noindent\textbf{Incremental MiSTel Synthesis.}
Surprisingly, \cref{alg:mixedtemplate} can directly be extended to incremental MiSTel synthesis. That is, given an already computed winning MiSTel $\Strat=(\safegroup,\colivegroup,\energyTemplate)$ with winning region $\win$ for the mixed game $\game = (\gamegraph,\varphi\wedge\spec)$, $\Strat$ can be refined to a new winning MiSTel $\Strat'=(\safegroup',\colivegroup',\energyTemplate')$ for the mixed game $\game = (\gamegraph,\varphi\wedge\spec\wedge\spec')$ with winning region $\win'\subseteq\win$ if a new quantitative objective $\spec'$ arrives. 
For this, one would use \cref{alg:mixedtemplate} for the combined quantitative objective $\spec\wedge\spec'$ and hot-starts both $\computeTemp$ and $\energyTemp$ with $(\safegroup,\colivegroup)$ and $\energyTemplate$ from the already existing MiSTel $\Strat$. %

\begin{remark}
 We recall from \cref{sec:nocomp} that adding additional \emph{quantitative} objectives only allows to extract new winning \emph{strategies} (and no templates) if all quantitative objectives are mean-payoff objectives. Nevertheless, this clearly can also be incorporated in an iterative version of \cref{alg:mixedtemplate}. %
\end{remark}

\subsection{Applications of MiSTels}\label{sec:MiSTelApp}
While MiSTels can be applied to any mixed game, the class of winning strategies they capture might be a quite restricted (possibly empty) subset of all available strategies. This can (i) downgrade the adaptability of strategy choices during runtime %
and (ii) might lead to an empty winning region in incremental or compositional synthesis approaches. MiSTels therefore have a higher potential whenever they capture a large set of winning strategies. One such example are co-Büchi games, where it is known that PeSTels are naturally bounded. After discussing how specifications commonly used in CPS applications reduce to co-Büchi games in this section, we investigate this instance further in \cref{sec:MPCB_Compare}. %

\smallskip
\noindent\textbf{LTL$_f$ Specifications.}
LTL$_f$ is a fragment of linear temporal logic (LTL) where system properties are evaluated over finite traces only, which has gained popularity in robotic applications over the last decade \cite{de2015synthesis}. Specifications given in LTL$_f$ can be translated into deterministic finite automata which can be composed with a weighted game, adding a reachability objective to it.  
The resulting reachability objective can then be translated into a co-Büchi objective by removing all outgoing transitions from target states, adding self-loops  to them and adding all states which are not a target state into the co-Büchi region.    

\smallskip
\noindent\textbf{Uniform Attractivity.}
If one manipulates the graph as outlined before, one can show that for every winning strategy $\strat$ exists a time bound $k\in\N$ s.t.\ all plays $\play$ compliant with $\strat$ will never leave the set of target states again after $k$ time steps, i.e., $\forall i\geq k.\ \play[i]\in T$ \footnote{For classical co-Büchi games, such a uniform bound over all plays does in general not exist (see \cite{GIRARD2021109543} for an example).}. %
Such specifications are called uniform attractivity specifications and are used to translate classical stability objectives into formal specifications. Formally, the computation of winning strategies for general uniform attractivity games (without manipulating the game graph) are solved by first computing the winning set of the safety objective $\win_{safe}:=Safety(T)$ and then solving the co-Büchi game $(G,\cobuchi(\win_{safe}))$. It is therefore not hard to see, that PeSTels for uniform attractivity games are also naturally bounded.

\smallskip
\noindent\textbf{Adding Quantitative Objectives.}
Due to the fact that every winning strategy comes with a uniform bound on when the target set is 
reached, the above instances of co-Büchi games are naturally combined with qualitative energy 
objectives with fixed initial credit. This is in contrast to `classical' co-Büchi objectives which 
are more naturally combined with mean-payoff objectives, ensuring that strategies are optimal 
in the limit even if they deviate for finite time durations. %

\subsection{Mean-Payoff co-Büchi Games}\label{sec:MPCB_Compare}
Mean-payoff co-Büchi games have been already studied in \cite{ChatterjeeHS17}
and will therefore be used to benchmark our prototype implementation of MiSTel synthesis against the state of the art algorithm, called \othertool, in \cref{sec:experiments}. %
 We note that these results carry over to control-inspired applications of MiSTels discussed before, as QaSTel synthesis coincides for energy and mean-payoff objectives and reachability and uniform attractivity simply utilize the (bounded) PeSTel synthesis algorithm specialized for co-Büchi games called $\cobuchiTemp$ from \cite[Alg.2]{Anandetal_SynthesizingPermissiveWinning_2023}. 

\smallskip
\noindent\textbf{Complexity.}
Using $\cobuchiTemp$ from \cite[Alg.2]{Anandetal_SynthesizingPermissiveWinning_2023} instead of $\computeTemp$ in $\mixedTemp$ (\cref{alg:mixedtemplate}), the complexity of MiSTel synthesis for mean-payoff co-Büchi games reduces to $ \bigO(n^2m+nmW) $, where $ n = \abs{V} $, $ m = \abs{E} $, and $ W$ is the maximum weight in $\weight$ (see \cref{corr:MPcobuchi} in \cref{sec:MPcobuchi}).
In comparison, the worst-case computation time of \othertool is $ \bigO(nmW) $ \cite[Theorem 5]{ChatterjeeHS17} and, hence, lower. %
We however note that we achieve the same worst-case complexity when $ W \geq n $, which is very often the case. %

\smallskip
\noindent\textbf{Completeness.}
We note that $\mixedTemp$ is not complete. First of all, the bounded PeSTel computed by $\cobuchiTemp$ does not capture winning strategies for instances %
in which $\po$ (unexpectedly) helps $\pz$ and is therefore already incomplete (see \cite{Anandetal_SynthesizingPermissiveWinning_2023} for details). In addition, conflicts with co-Büchi nodes are removed immediately for conflict resolution in $\mixedTemp$, while they could be used by a strategy finitely often. This leads to a further potential under-approximation of the winning region, as illustrated in the example game depicted in \Cref{fig:incomplete}. Here, \cobuchiTemp outputs the co-live edges denoted by orange dashed lines. Further, the activation energy of the edges $ (a,b) $ and $ (a,a) $ are $0$ and $1$ respectively, and hence, the edge $ (a,b) $ is a conflict by definition. Hence, in \cref{alg:mixedtemplate}, the edge $ (a,b) $ will be assigned the value $ \infty $ while resolving the conflict, making the node $ a $ losing in the next iteration. However, we observe that all the nodes are winning for $ \pz $.

\begin{wrapfigure}[8]{r}{0.45\textwidth} 
	\vspace{-1cm}
	\centering
	\begin{tikzpicture}[shorten >=1pt, node distance=2cm, on grid, auto, 
		unsafe/.style={red!70, dotted}, %
		colive/.style={orange!70, dashed}]
		\node[state, accepting] (a) {$ a $};
		\node[state] (b) [right of=a] {$ b $};
		\node[state,accepting] (c) [right of=b] {$ c $};
		
		\path[->] (a) edge [loop below] node {-1} (a) 
					edge [bend left, colive] node {0} (b);
		\path[->] (b) edge [bend left] node {-1} (a)
					edge [bend left] node {0} (c);
		\path[->] (c) edge [bend left,colive] node {-1} (b)
					edge [loop below] node {0} ();
	\end{tikzpicture}
	\vspace{-0.2cm}
	\caption{Mean-payoff \cobuchi game with $ \cobuchi(\{a, c\}) $
	.}\label{fig:incomplete}
\end{wrapfigure}

We note that the state-of-the-art algorithm by Chatterjee et al. \cite{ChatterjeeHS17} is instead complete. Our experimental results presented in \cref{sec:experiments} however show that the winning region computed by \mixedTemp coincides with the full winning region computed by \othertool for more then $90\%$ of the considered benchmark instances.

\section{Empirical Evaluations}~\label{sec:experiments}
This section aims to highlight the advantages of strategy templates for games with qualitative (and 
quantitative) objectives. However, benchmarking QaSTel (and MiSTel) synthesis as well as their adaptability 
and compositional properties is difficult for two reasons. First, to the best of our knowledge, 
there are no benchmark suites that include real-world applications with combined quantitative and 
qualitative objectives. Second, while there exist algorithms for mean-payoff parity games, we are 
not aware of any implementation or benchmark-based evaluation of these algorithms. To address this, 
we build new benchmark suites based on the SYNTCOMP benchmark~\cite{syntcomp} and implement the 
\othertool algorithm from \cite{ChatterjeeHS17} (discussed in \cref{sec:MPCB_Compare}) to benchmark 
our Java-based prototype implementation \tool of \mixedTemp against it.

\smallskip
\noindent\textbf{Experimental Setup.} 
We built a new benchmark suite from the SYNTCOMP benchmark~\cite{syntcomp} by (a) translating 
SYNTCOMP parity games into mean-payoff games using standard techniques 
\cite{MeyerLuttenberger_gpumpg}, and (b) adding co-Büchi objectives to the qualitative games 
from 
(a) by randomly choosing avoidance regions (i.e., the set of co-Büchi nodes which a play is not 
allowed to visit infinitely often). For practical reasons, 
we imposed limits of $5 \times 10^5$ nodes and $10^5$ energy values on the edges.
As a 
result, our benchmark suite comprises 245 mean-payoff game graphs.
All experiments were executed on a 32-core Debian machine equipped with an Intel Xeon E5-V2 CPU 
(3.3 GHz) and up to 256 GB of RAM.

\begin{figure}[t]\label{plots}
	\centering
	\begin{subfigure}[b]{0.45\textwidth}
		\includegraphics[width=\textwidth]{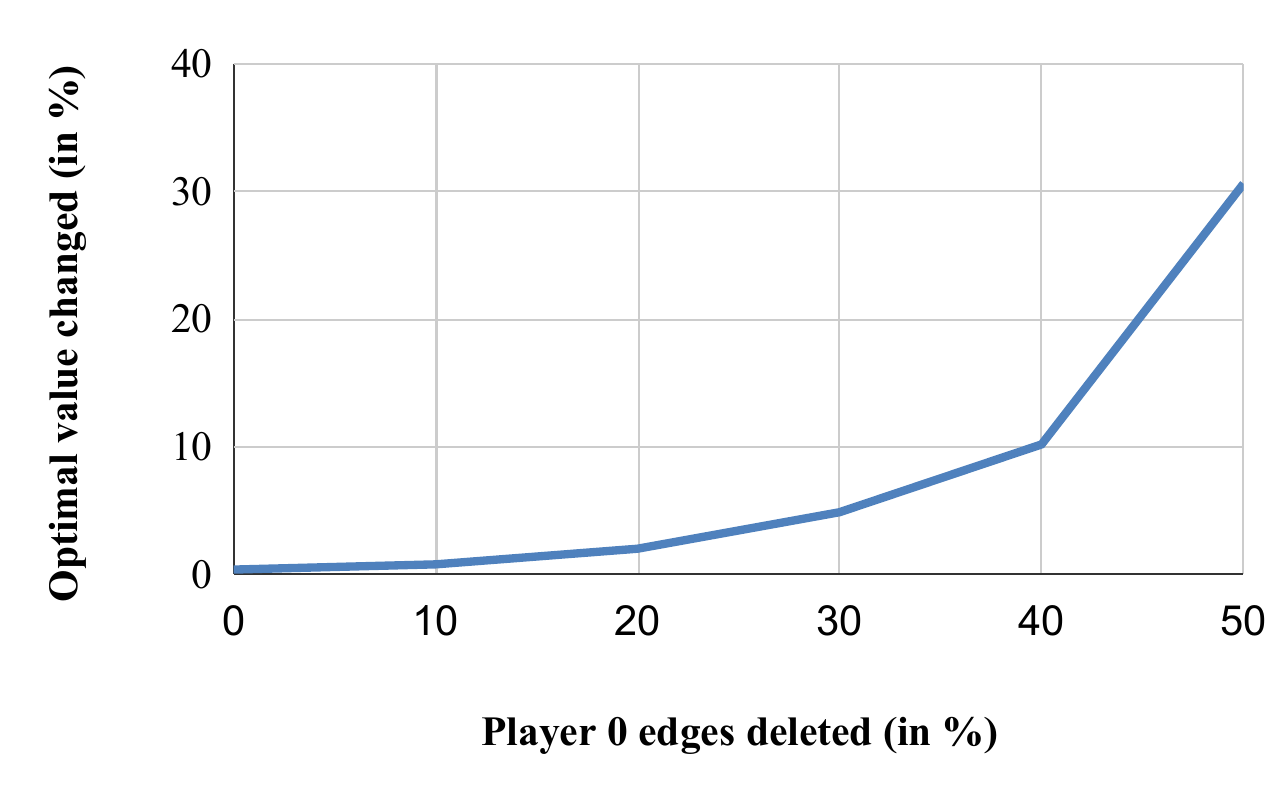}
		\caption{Fault tolerance}\label{plots_faultTolerance}
	\end{subfigure}
	\quad
	\begin{subfigure}[b]{0.45\textwidth}
		\includegraphics[width=\textwidth]{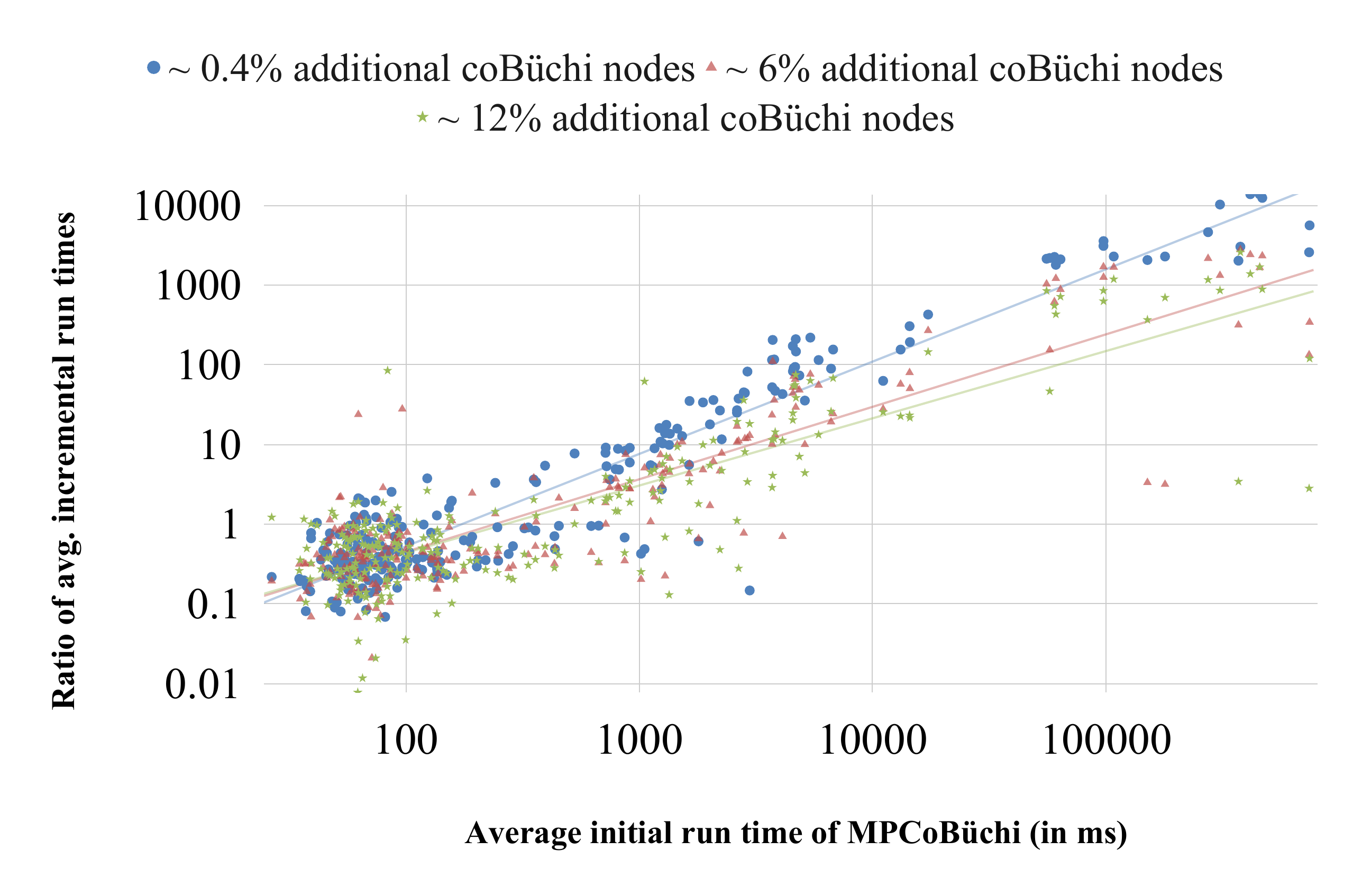}
		\caption{Incremental synthesis}\label{plots_incremental}
	\end{subfigure}

	\begin{subfigure}[b]{0.45\textwidth}
		\includegraphics[width=\textwidth]{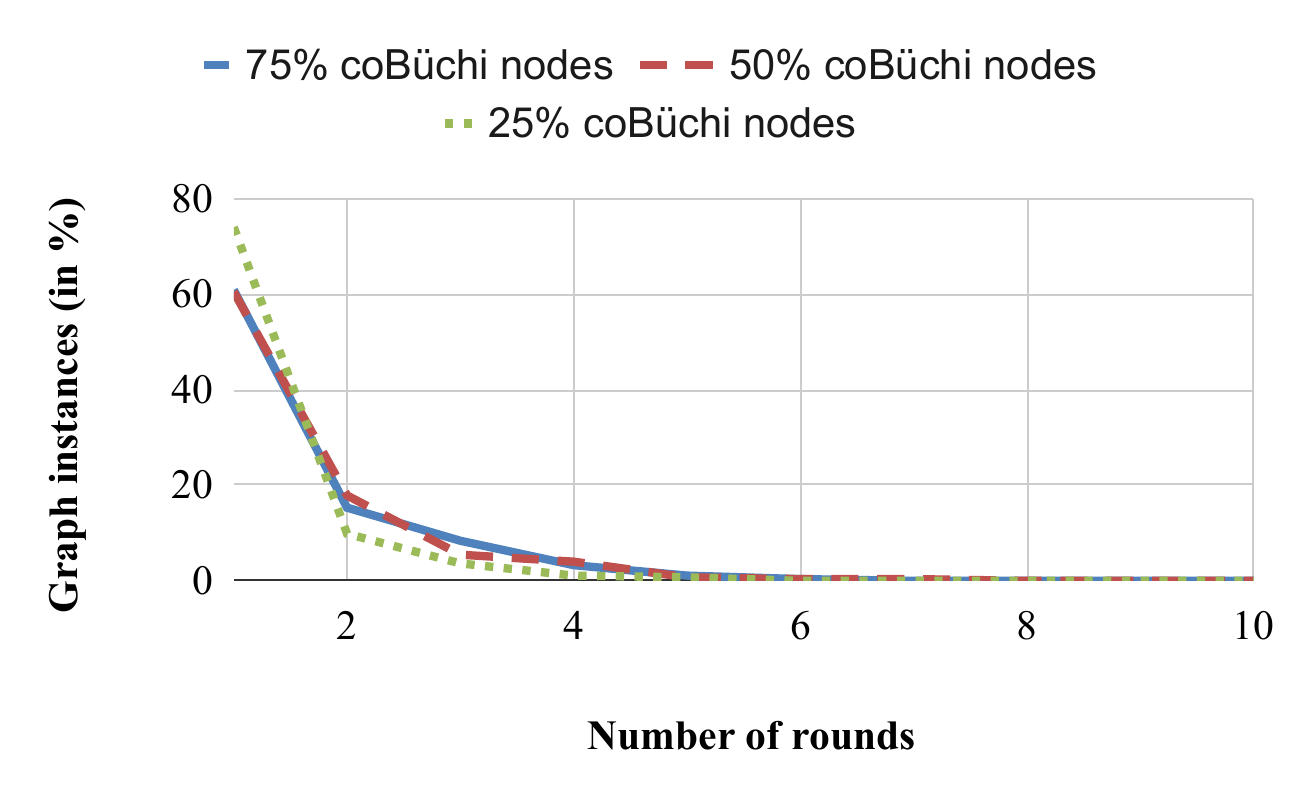}
		\caption{Rounds of conflict resolution}\label{plots_allMPCoBuechi}
	\end{subfigure}
	\quad
	\begin{subfigure}[b]{0.45\textwidth}
		\includegraphics[width=\textwidth]{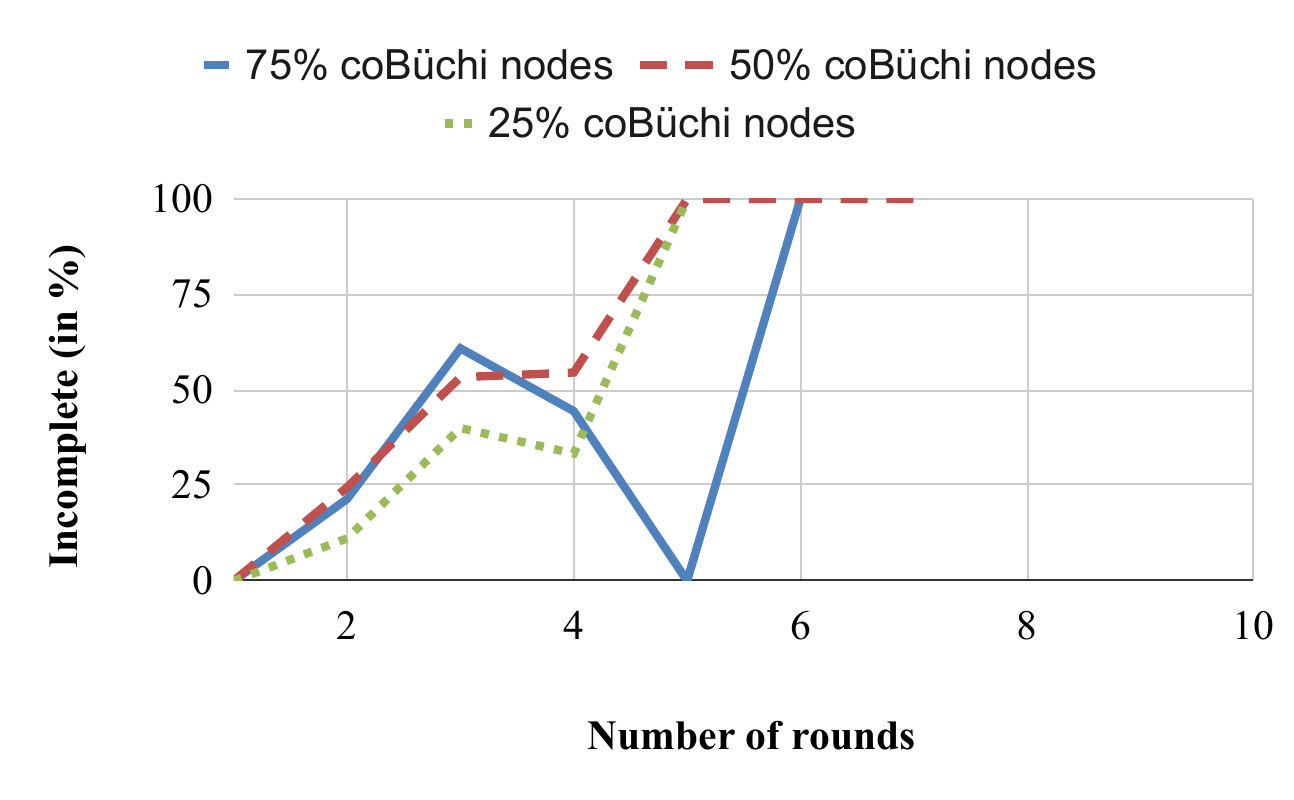}
		\caption{Incompleteness of \tool}\label{plots_incomplete}
	\end{subfigure}
	
	\caption{Plots summarizing the experimental evaluations. Bigger figures can be found in the 
	appendix. }
	\vspace{-2em}
\end{figure}

\smallskip
\noindent\textbf{Dynamic Strategy Adaptation.}
We have conducted experiments to evaluate the robustness of QaSTels to the unavailability of edges due to additional edge preferences used for dynamic strategy extraction at runtime. As discussed in \cref{sec:LocalAdapt}, QaSTels do not need to be adapted if the edges with the minimum activation energy are still available. %
To evaluate how likely QaSTels need to be recomputed we randomly removed edges from the game graphs in our benchmark suite and measured the average number of deletions required until a minimum activation edge was removed for any node. We 
ran \tool{} on each graph, incrementally deleting randomly selected edges until a change in 
the optimal value occurred. This process was repeated 10 times per graph, and we computed the 
average number of edge deletions needed to trigger this change. 
\cref{plots_faultTolerance} illustrates the 
trend between 
the proportion of game graphs in which the optimal value changes against the proportion of \pz 
edges removed from them. The trend clearly shows that in practice, it is very unlikely that 
minimum activation edges are deleted even after removing a significant number of edges 
from 
the graph. This establishes the efficiency of QaSTels to produce robust strategies.

\smallskip
\noindent\textbf{Incremental Synthesis.} 
We have benchmarked our prototype implementation \tool of \mixedTemp for mean-payoff \cobuchi games 
against our implementation of the \othertool algorithm \cite{ChatterjeeHS17}.
The experiment starts by providing both algorithms only with a mean-payoff game from our benchmark suite.
We then add 5 \cobuchi objectives incrementally. To evaluate the dependence of runtimes on the \cobuchi objectives, we run this experiment thrice, 
with varying amounts of additional nodes added to the avoidance region in every incremental step: (i) 0.4\% (blue circles in \cref{plots_incremental}), (ii) 6\% (red triangles in \cref{plots_incremental}) and (iii) 12\% (green stars in \cref{plots_incremental}) leading to (i) ~2\%, (ii) ~30\% and (iii) ~60\% avoidance region in the final iteration of incremental synthesis. %
To compare the performance of \tool and \othertool, \cref{plots_incremental} shows (i) the ratio of average runtimes of \othertool vs \tool to complete the 5 incremental synthesis steps outlined above (y-axis in \cref{plots_incremental}), against (ii) the running times of \othertool over the original mean-payoff game without co-Büchi objectives (x-axis in \cref{plots_incremental}).

The plot shows that when the instances are very simple (resulting in a low 
initial runtime 
of \othertool), \tool may be slow in the recomputation on additional \cobuchi objectives. However, as 
the graphs get more complex, \tool is magnitudes faster than \othertool in recomputing the winning 
regions on the fly. In fact, we see that \tool is around 10000 times faster on the most complex 
instances. This demonstrates the utility of our approach avoiding the need of recomputations from 
scratch. 

\smallskip
\noindent\textbf{Completeness.} 
We conducted experiments to quantify the loss of completeness (in terms of the size of the winning region) of $\mixedTemp$ compared to \othertool.  
For each of the 245 mean-payoff game graphs in our benchmark suite, we randomly 
selected (i) 25\%, (ii) 50\%, and (iii)75\% of the nodes and defined them as the avoidance region 
in the respective mean-payoff co-Büchi game. \tool could \emph{not} compute the \emph{complete} 
winning region in (i) 10, (ii) 30 and (iii) 28 instances of games, respectively. This shows that 
$\tool$ computes the full winning region for over $90\%$ of the considered games. 

To further investigate the root cause of incompleteness, we evaluated the number of rounds of 
conflict resolutions required by \tool for each instance. In \cref{plots_allMPCoBuechi}, we plot the 
percentage of instances requiring a certain number of conflict resolution rounds. In 
\cref{plots_incomplete}, we plot the percentage of incomplete instances against the number of 
conflict resolution rounds required to solve the respective instance. 
Together the two plots present the relationship between the number of game graphs for 
which \tool{} was unable to synthesize a winning strategy and the number of iterations 
required to resolve conflicts. \cref{plots_incomplete} indicates a clear trend: as the number of 
iterations 
increases, the likelihood of failure to synthesize a winning strategy also increases. However, 
as noted in \cref{plots_allMPCoBuechi}, the number of iterations required in practice remains low. 
This observation explains the low number of incomplete instances (less than 10\%, as noted above), 
and supports the claim that while 
our algorithm may be incomplete in the worst case, it is able to synthesize winning strategies in 
most scenarios.

\label{beforebibliography}
\newoutputstream{pages}
\openoutputfile{main.pg}{pages}
\addtostream{pages}{\getpagerefnumber{beforebibliography}}
\closeoutputstream{pages}

 \bibliographystyle{splncs04}
 \bibliography{main}
 
 \newpage
 \appendix
\section{Missing Proofs}\label{sec:missingproofs}

\subsection{Proof of \Cref{prop:extractStrategy}}
\restatePextractStrategy*
\begin{proof}
	Consider the positional strategy $\strat$ such that for every $v\in V_0$, $\strat(v)$ is an edge with the smallest activation value in $\energyTemplate$, i.e., $\strat(v) = \arg\min\{\activation_\energyTemplate(e) \mid e\in E(v)\}$. 
	We will show that $\strat$ follows $\energyTemplate$ in every weighted game with weight function $\weight$.
	It is enough to show that $(G,\strat) \vDash_{c}\energyTemplate$ for every $c\in\N$.
	
	Let $c\in\N$ and consider a $\strat$-play $\play = v_0v_1\cdots$.
	For each $i\in\N$, let $e_i = (v_i,v_{i+1})$ and $c_i = c + \weight(\play[0;i])$ be the edge and the credit after $i$-th step.
	Let $k\in\Ninf$ be the smallest index such that $v_k\in V_0$ and $e_k\notin \energyTemplate(v_k, c_k)$.
	Then, it is enough to show that $\energyTemplate(v_k, c_k) = \emptyset$.
	Suppose $e\in \energyTemplate(v_k, c_k)$, then $\activation_\energyTemplate(e) \leq c_k$.
	As $\strat(v_k)=e_k$ is the edge with the smallest activation value at $v_k$, it holds that $\activation_\energyTemplate(e_k) \leq \activation_\energyTemplate(e) \leq c_k$ and hence, $e_k\in \energyTemplate(v_k, c_k)$. 
	This contradicts the choice of $k$ and hence, $\energyTemplate(v_k, c_k) = \emptyset$.
	\qed
\end{proof}

\subsection{Proof of \Cref{thm:Winning}}
\restateWinning*
We will show the proof for energy games and mean-payoff games separately in the following two lemmas.

\begin{lemma}\label{thm:energyGameWinning}
	Given a game graph $G = (V,E)$ with weight function $\weight$, the optimal QaSTel $ \energyTemplate $ is winning in the energy game $(G,\energy_c(\weight))$ for every initial credit $c\in \N$.
\end{lemma}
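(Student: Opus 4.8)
The plan is to reduce the lemma to a single energy-level invariant maintained along plays. First I would record the identities the construction and \cref{rem:optVoptE} provide: the optimal QaSTel is $\energyTemplate(u,k) = \{e \in E(u) \mid \optE(e) \le k\}$ for $\mu = \optE$; moreover $\optE((u,v)) = \opt(v) \ominus \weight((u,v))$; and since $\opt$ is the fixed point of $\fixpointV$, we have $\opt(u) = \min\{\optE(e) \mid e \in E(u)\}$ for $u \in V_0$ and $\opt(u) = \max\{\optE(e) \mid e \in E(u)\}$ for $u \in V_1$. Plugging these into the description of the winning region $\win_c$ in \eqref{equ:optimalToWinning} shows that the winning region of $(G, \energy_c(\weight))$ is exactly $\win_c = \{v \in V \mid \opt(v) \le c\}$. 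Hence it suffices to prove that every strategy $\strat$ with $\strat \vDash_{c} \energyTemplate$ is winning from every node $v \in \win_c$.

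For the invariant, I would fix such a $\strat$, a node $v \in \win_c$, and a $\strat$-play $\play = v_0 v_1 \cdots$ with $v_0 = v$; write $c_i := c + \weight(\play[0;i])$ and $e_i := (v_i, v_{i+1})$. The claim, proved by induction on $i$, is $c_i \ge \opt(v_i)$; since $\opt \ge 0$ everywhere, this already yields $c_i \ge 0$. The base case is $c_0 = c \ge \opt(v_0)$ because $v \in \win_c$. For the step, assume $c_i \ge \opt(v_i)$; the core subclaim is $\optE(e_i) \le c_i$. If $v_i \in V_1$, this is immediate from $\optE(e_i) \le \max_{e \in E(v_i)} \optE(e) = \opt(v_i) \le c_i$. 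If $v_i \in V_0$, the hypothesis forces $\opt(v_i) = \min_{e \in E(v_i)} \optE(e) \le c_i$, so $\energyTemplate(v_i, c_i) \ne \emptyset$; since $\strat$ follows $\energyTemplate$, $\play$ is a $(\energyTemplate, c)$-play, and as its escape point cannot be a vertex in $V_0$ with non-empty activation set, the move $e_i$ must be activated, i.e. $\optE(e_i) \le c_i$. Finally, unfolding $\optE(e_i) = \opt(v_{i+1}) \ominus \weight(e_i) = \max(\opt(v_{i+1}) - \weight(e_i), 0) \le c_i$ gives $\opt(v_{i+1}) \le c_i + \weight(e_i) = c_{i+1}$, closing the induction.

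With the invariant in hand, $c + \weight(\play[0;i]) = c_i \ge \opt(v_i) \ge 0$ for every $i$, so $\play \in \energy_c(\weight)$. Since $\play$ was an arbitrary $\strat$-play from an arbitrary $v \in \win_c$, $\strat$ is winning from $\win_c$; as $\win_c$ is the winning region, $\energyTemplate$ is winning in $(G, \energy_c(\weight))$ for every $c \in \N$, which is the statement. The one step I expect to need genuine care is the $V_0$ case above: one must show that a strategy following $\energyTemplate$ and started inside $\win_c$ never triggers the escape clause of the $(\energyTemplate, c)$-play definition, so that it keeps to activated edges at $V_0$-vertices forever. The invariant itself rules out ever reaching a vertex in $V_0$ with empty activation set, and a short additional argument excludes a premature escape at a $V_1$-vertex (which could otherwise let the play drift onto a non-activated edge and break the invariant). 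Everything else is routine arithmetic with the truncated subtraction $\ominus$.
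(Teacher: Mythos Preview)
Your argument is correct and takes a different route from the paper. The paper never unfolds the fixed-point equations; it works directly from the semantic meaning of $\optE$ (the least credit admitting a winning continuation) and argues by contradiction that the escape index $k$ of any $(\energyTemplate,c)$-play from the winning region must be $\infty$: assuming $k<\infty$, the last activated edge $e_k$ satisfies $c_k \ge \optE(e_k)$, so a $\pz$-strategy witnessing this bound, suitably shifted, wins from $u_{k+1}$ with credit $c_{k+1}$, which forces some outgoing edge of $u_{k+1}$ to have $\optE$-value at most $c_{k+1}$, contradicting $\energyTemplate(u_{k+1},c_{k+1})=\emptyset$. Your invariant $c_i \ge \opt(v_i)$ is more elementary---once \cref{rem:optVoptE} is in hand, everything is arithmetic with $\ominus$, with no appeal to auxiliary winning strategies---and it treats $V_1$-moves explicitly via the $\max$-equation, whereas the paper's one-step backtracking tacitly assumes $u_k \in V_0$ (the assertion $e_k \in \energyTemplate(u_k,c_k)$ is only well-typed there) and would need essentially your invariant to cover the case $u_k \in V_1$. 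Conversely, the paper's route does not rely on the fixed-point identity from \cref{rem:optVoptE} and works straight from the definition of $\optE$ as a minimal credit. Regarding your worry about a premature escape at a $V_1$-vertex: in the paper's own restatement of the definition inside the proof, the escape point $u_{k+1}$ is taken to lie in $V_0$, so under that intended reading your invariant already suffices and no extra argument is needed.
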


\begin{proof}
	Consider an energy game $(G,\energy_c(\weight))$ for some initial credit $c\in \N$, and let $u_0\in V$ be a node in the winning region.
	We need to show that for every strategy $\strat$ following $\energyTemplate$ is winning from $u_0$, i.e., every $\strat$-play from $u_0$ is winning.
	As every such $\strat$-play from $u_0$ is a $(\energyTemplate,c)$-play from $u_0$, it is enough to show that every $(\energyTemplate,c)$-play from $u_0$ is winning.
	
	Let $\play = u_0u_1\cdots$ be a $(\energyTemplate,c)$-play from $u_0$. 
	For each $i\in\N$, let us denote by $e_i = (u_i,u_{i+1})$ the edge taken at step $i$, and the credit at the beginning of step $i$ by $c_i = c + \weight(\play[0;i])$.
	By definition of $\energyTemplate$, there exists a $k\in \Ninf$ such that for all $i\in [0;k]$ with $u_i\in V_0$, $e_i\in \energyTemplate(u_i,c_i)$, and if $k\neq \infty$, then $u_{k+1}\in V_0$ with $\energyTemplate(u_{k+1},c_{k+1}) = \emptyset$.
	
	Suppose $k \neq \infty$. As $e_k\in \energyTemplate(u_k,c_k)$, by definition of $\energyTemplate$, $c_k\geq \optE(e_k)$. Hence, there exists a $\pz$ strategy $\strat$ such that every $\plays_\strat(G,e_k)$ is winning for initial credit $c_k$. That means, every play that starts with $e_k$ and then follows $\strat$ is winning for initial credit $c_k$. 
	This implies every $\strat$-play from $u_{k+1}$ is winning for initial credit $c_k-\weight(e_k) = c_{k+1}$.
	Consider the strategy $\strat'$ such that for every history $\prefix\in V^*V_0$, $\strat'(\prefix) = \strat(u_{k+1}\prefix)$.
	Then, by construction, every play that starts with the edge $e_{k+1}' = \strat(u_{k+1})$ (this is well-defined as $u_{k+1}\in V_0$) and then follows $\strat'$ is winning for initial credit $c_{k+1}$.
	Thus, by definition of $\optE$, $c_{k+1}\geq \optE(e_{k+1}')= \activation_\energyTemplate(e_{k+1}')$, and hence $e_{k+1}'\in \energyTemplate(u_{k+1},c_{k+1})$. This contradicts the assumption $\energyTemplate(u_{k+1},c_{k+1}) = \emptyset$. Hence, $k = \infty$.
	
	As $k = \infty$, it holds that for every $i\in \N$, $e_i\in \energyTemplate(u_i,c_i)$, and hence, by definition of QaSTels, $c_i\geq 0$.
	Therefore, $\play$ is winning. \qed
\end{proof}

\begin{lemma}\label{thm:meanpayoffGameWinning}
	Given a game graph $G = (V,E)$ with weight function $\weight:E\rightarrow [-W;W]$, the optimal QaSTel $ \energyTemplate $ is winning in the mean-payoff game $(G,\meanpayoff(\weight))$.
\end{lemma}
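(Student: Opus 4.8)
The plan is to reduce this mean-payoff statement to the energy statement already established in \Cref{thm:energyGameWinning}, using the well-known equivalence between mean-payoff games and energy games with unknown initial credit. Set $c^{*} := W\cdot\abs{V}$. Recall from \eqref{equ:optimalToWinning} that, with $\mu = \optE$, the winning region of the mean-payoff game $(G,\meanpayoff(\weight))$ --- call it $\win$ --- is read off from $\mu$ by exactly the same formula as the winning region $\win_{c}$ of the energy game $(G,\energy_{c}(\weight))$ for any $c\geq c^{*}$; in particular $\win = \win_{c}$ for all such $c$.

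Next, I would take an arbitrary $\pz$ strategy $\strat$ that follows $\energyTemplate$ in the mean-payoff sense. By definition this means $\strat \vDash_{c}\energyTemplate$ for some $c\geq c^{*}$; fix one such $c$. Applying \Cref{thm:energyGameWinning}, the optimal QaSTel $\energyTemplate$ is winning in $(G,\energy_{c}(\weight))$, so $\strat$ is winning from $\win_{c} = \win$ in that energy game: for every $\strat$-play $\play$ starting at a node of $\win$ and every $i\in\N$ we have $c + \weight(\play[0;i])\geq 0$, i.e.\ $\weight(\play[0;i]) \geq -c$. The remaining step is a one-line limit computation: $\average(\play[0;n]) = \tfrac1n\,\weight(\play[0;n]) \geq -\tfrac{c}{n}$, whence $\limsup_{n\to\infty}\average(\play[0;n]) \geq 0$ and therefore $\play \in \meanpayoff(\weight)$. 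Since $\strat$ and the starting node in $\win$ were arbitrary, every strategy following $\energyTemplate$ is winning from $\win$, which is precisely the assertion that $\energyTemplate$ is winning in the mean-payoff game.

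I do not expect a genuine obstacle here; the only points requiring care are (i) the quantifier structure in the definition of ``follows'' for mean-payoff games, where the credit $c$ is existential and may depend on $\strat$, so one must fix it before invoking \Cref{thm:energyGameWinning}; and (ii) the fact --- already recorded in the preliminaries and in \eqref{equ:optimalToWinning} --- that the winning region stabilizes for all $c\geq c^{*}$, which is what guarantees that the energy winning region $\strat$ is winning from is genuinely the full mean-payoff winning region $\win$ rather than a proper subset.
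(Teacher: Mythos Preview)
Your proposal is correct and follows essentially the same route as the paper's proof: fix a credit $c\geq W\cdot\abs{V}$ witnessing that $\strat$ follows $\energyTemplate$, use the equality of the mean-payoff and energy winning regions at that credit, invoke \Cref{thm:energyGameWinning} to obtain $c+\weight(\play[0;i])\geq 0$, and conclude via the $-c/n\to 0$ limit. The only cosmetic difference is that the paper cites the mean-payoff/energy equivalence from the literature rather than reading it off \eqref{equ:optimalToWinning}.
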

\begin{proof}
	Let $u_0\in\win(G,\meanpayoff(\weight))$ be a node in the winning region and
	let $\strat$ be a strategy that follows $\energyTemplate$ in mean-payoff game $(G,\meanpayoff(\weight))$.
	Then, it holds that $\strat\vDash_{c}\energyTemplate$ for some $c \geq W\cdot\abs{V}$.
	We need to show that every $\strat$-play from $u_0$ is winning.
	
	It is well-known that the winning region of mean-payoff games is same as the winning region of energy games with unknown initial credit~\cite{fastermeanpayoffgames}, i.e., $\win(G,\meanpayoff(\weight)) = \win(G,\energy(\weight))$. Furthermore, as $\win(G,\energy(\weight)) = \win(G,\energy_{c}(\weight))$, we have $u_0\in \win(G,\energy_{c}(\weight))$.
	Let $\play$ be a $\strat$-play from $u_0$.
	As $\strat\vDash_{c}\energyTemplate$, by \cref{thm:energyGameWinning}, $\play$ is winning in the energy game $(G,\energy_{c}(\weight))$. That means, for each $i\in\N$, $c + weight(\play[0;i])\geq 0$.
	This implies that for each $i\in\N$, $\average(\play[0;i]) = \frac{\weight(\play[0;i])}{i} \geq \frac{-c}{i}$.
	Thus, $\limsup_{i\to\infty}\average(\play[0;i])\geq \lim_{i\to\infty}\frac{-c}{i} = 0$, and hence, $\play$ is winning in the mean-payoff game $(G,\meanpayoff(\weight))$.\qed
\end{proof}

\subsection{Proof of \Cref{thm:energyGamePermissive}}
\restateEnergyMaximal*
\begin{proof}
	Consider an energy game $(G,\energy_c(\weight))$ for some initial credit $c\in \N$ and let $\strat$ be a winning strategy.
	It is enough to show that for every $\strat$-play is a $(\energyTemplate,c)$-play.

	Let $\play = u_0u_1\cdots$ be a $\strat$-play. For each $i\in \N$, let $e_i = (u_i,u_{i+1})$ be the edge taken at step $i$, and the credit at the beginning of step $i$ by $c_i = c + \weight(\play[0;i])\geq 0$.
	For every $i\in \N$, let $\strat^i$ be a $\pz$ strategy such that for every history $\prefix$, $\strat^i(\prefix) = \strat(\play[0;i]\prefix)$.

	Suppose $\play$ is a winning play,
	then, by construction, every $\strat^i$-play from $u_{i+1}$ is winning for initial credit $c_{i+1}$. This implies that every play that starts with $e_i$ and then follows $\strat^i$ is winning for initial credit $c_{i+1}-\weight(e_i) = c_i$.
	Hence, by definition of $\optE$, $c_i\geq \optE(e_i) = \activation_\energyTemplate(e_i)$, and hence, $e_i\in \energyTemplate(u_i,c_i)$ for all $i\in \N$. Therefore, $\play$ is a $(\energyTemplate,c)$-play. 
	
	Now, suppose $\play$ is not winning. As $\strat$ is a winning strategy and $\play$ is a $\strat$-play, it holds that $u_0\not\in\win(G,\energy_c(\weight))$. This implies, there is no winning strategy from $u_0$ for initial credit $c$. 
	This means, for every edge $e\in u_0E$, there is no strategy $\strat'$ such that $\plays_{\strat'}(G,e) \subseteq \energy_c(\weight)$, and hence, $\optE(e) > c$.
	Therefore, $\energyTemplate(u_0,c) = \emptyset$, and hence, $\play$ is a $(\energyTemplate,c)$-play.
	\qed
\end{proof}

\subsection{Proof of \Cref{lemma:meanpayoffFiniteMemory}}
\restateLMeanpayoffFiniteMemory*
\begin{proof}
	Let $\strat$ be a winning strategy with finite memory in the mean-payoff game $(G,\meanpayoff(\weight))$. Let $v\in\win(G,\meanpayoff(\weight))$ be a winning node and let $\play$ be a $\strat$-play from $v$.
	
	Consider the game graph $G_\strat = (V',E')$ induced by the strategy $\strat$.
	Then, every simple cycle reachable from $v$ in the game graph $G_\strat$ has a non-negative weight, the $\strat$-play from $v$ that just loops in the cycle with negative weight has a negative limit average weight.
	Moreover, as $\play$ is a play in $G_\strat$, after an initial acyclic part of length $< \abs{V'}$, the play $\play$ only visits simple cycles of $G_\strat$.
	As all such cycles have non-negative weight, the total weight of every prefix of $\play$ must be at least $-2\abs{V'}\cdot W$, i.e., $\weight(\play[0;i])\geq - 2\abs{V'}\cdot W$ for all $i\in \N$.
	As $v_0$ and $\play$ are arbitrary, for weight bound $\bound{\strat} = 2\abs{V'}\cdot W$, the lemma holds.\qed
\end{proof}

\subsection{Proof of \Cref{thm:meanpayoffGamePermissive}}
\restateTmeanpayoffGamePermissive*
\begin{proof}
	Let $\strat$ be a winning strategy with finite memory in the mean-payoff game $(G,\meanpayoff(\weight))$.
	We need to show that every $\strat\vDash_{c}\energyTemplate$ for some $c \geq W\cdot\abs{V}$, where $W$ is the maximum weight in the game.
	
	By \cref{lemma:meanpayoffFiniteMemory}, there exists a weight bound $\bound{\strat}\in \N$ such that for every $\strat$-play $\play$ from a winning node $v\in\win(G,\meanpayoff(\weight))$, it holds that $\weight(\play[0;i])\geq -\bound{\strat}$ for all $i\in \N$.
	Let $c = \max\{W\cdot\abs{V},\bound{\strat}\}$.
	Since $\win(G,\meanpayoff(\weight)) = \win(G,\energy_{c}(\weight))$ (as discussed in the proof of \cref{thm:meanpayoffGameWinning}), it holds that for every $\strat$-play $\play$ from a node $v\in\win(G,\energy_{c}(\weight))$, we have $c + \weight(\play[0;i])\geq \bound{\strat} + \weight(\play[0;i])\geq 0$ for all $i\in \N$. Hence, by definition, $\strat$ is a winning strategy in the energy game $(G,\energy_{c}(\weight))$.
	Then, by \cref{thm:energyGamePermissive}, $\strat\vDash_{c}\energyTemplate$.\qed
\end{proof}

\subsection{Proof of \Cref{thm:multiMeanPayoff}}
\restateMultiMeanPayoff*
\begin{proof}
	Let $\win^j$ and $\win^j_i$ be the corresponding region, and $\energyTemplate^j_i$ be the corresponding QaSTel in the $j$-th iteration of the while loop. 
	As $\win^j_i$ is a winning region computed by $\energyTemp$, there is no $\po$ edge from $\win^j_i$ to outside of $\win^j_i$.
	Hence, all the edges from $W^j$ to outside of $\win^j$ are $\pz$ edges.
	Furthermore, since in every iteration, we are hot-starting the procedure $\energyTemp$ from previous iteration by making activation values of some of these $\pz$ edges $\infty$, the winning region $\win^{j+1}$ is a subset of $W^j$.
	Hence, the while loop terminates within $\abs{V}$ iterations. As the algorithm only hot-starts the value iteration algorithm for each objective, the time complexity of the algorithm is $O(k\cdot\abs{V}\cdot\abs{E})$.
	
	Now, suppose $k$ is the last iteration. Then, as $\energyTemplate^k_i$ is obtained from $\energyTemp(G,\weight_i,\cdot)$, it holds that $\energyTemplate^k_i$ is winning from $\win^k$ in the mean-payoff game $(G,\meanpayoff(\weight_i))$.
	Hence, every strategy $\strat_i$ obtained in the algorithm after termination of while loop is a winning strategy from $\win^k$ in the mean-payoff game $(G,\meanpayoff(\weight_i))$.
	Then, by the property of $\combine$, $\combine(\strat_1,\ldots,\strat_k)$ is a winning strategy from $\win^k$ in the mean-payoff game $(G,\bigwedge_{i\in[1;k]}\meanpayoff(\weight_i))$.
	Hence, it is enough to show that $\win^k$ is the winning region $\win'$ of the mean-payoff game $(G,\bigwedge_{i\in[1;k]}\meanpayoff(\weight_i))$.

	By the previous argument, $\win^k\subseteq \win'$.
	Now, let us show using induction on $j\in[1;k]$ that $\win^j\supseteq \win'$.
	For base case $j=1$, as intersection of winning regions of each mean-payoff objective contains the winning region of their conjunction, it holds that $\win^1\supseteq \win'$.
	Now, suppose $\win^j\supseteq \win'$ for some $j\in[1;k-1]$. Since removing edges going out of winning region does not change the winning region, making activation values of some of these edges $\infty$ does not change the winning region. Hence, $\win^{j+1}\supseteq \win'$. Therefore, $\win^k = \win'$.\qed
\end{proof}

\subsection{Proof of \Cref{prop:correctnessOfHotstarting}}\label{sec:correctnessOfHotstarting}
\begin{corollary}\label{prop:correctnessOfHotstarting}
	Given the premises of Prop.~\ref{prop:correctnessOfMinAct}, it holds that the \qastel $\energyTemplate':=\energyTemp(G',\weight,\mu_0) = \energyTemp(G',\weight,\activation_{\energyTemplate})$ is optimal for $G'$. %
\end{corollary}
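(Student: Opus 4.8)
The plan is to treat the statement as two separate claims: that $\energyTemplate' := \energyTemp(G',\weight,\mu_{0})$ is optimal for $G'$, and that it equals $\energyTemp(G',\weight,\activation_{\energyTemplate})$. The first claim is essentially definitional: by \cref{thm:edgeOptimal}, $\energyTemp(G',\weight,\mu_{0})$ returns the fixed point $\computeFixpoint(G',\weight,\fixpointE,\mu_{0})$, which is the edge-optimal value function of $G'$ --- write it $\optE_{G'}$ --- together with the QaSTel built from $\optE_{G'}$ via \eqref{equ:eTemp}; such a QaSTel is optimal for $G'$ by definition. So the real work is the hot-starting equality: running the edge-based value iteration for $G'$ from the initial function $\activation_{\energyTemplate}$ rather than from $\mu_{0}$ yields the same fixed point.

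For this, I would first record that, since $\energyTemplate$ is the optimal QaSTel of the original graph $G$, unfolding the definition of $\activation$ on \eqref{equ:eTemp} gives $\activation_{\energyTemplate}(e) = \optE_{G}(e)$ for every $e \in E$, where $\optE_{G}$ is the edge-optimal value function of $G$. I then claim the pointwise sandwich $\mu_{0} \le \activation_{\energyTemplate} \le \optE_{G'}$ on $E'$. The lower bound is immediate, as $\optE_{G}$ takes values in $\Ninf$. The upper bound is the substantive point: $E'$ is obtained from $E$ by deleting only $\p{0}$-edges (the set $E_{t}^{*} \subseteq E_{0}$), and deleting $\p{0}$-edges cannot reduce the optimal initial credit along a surviving edge. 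Concretely, any $\p{0}$-strategy witnessing $\optE_{G'}(e) \le c$ in $G'$ never uses the deleted edges and leaves $\p{1}$'s moves untouched, so it is a $\p{0}$-strategy in $G$ with exactly the same set of compliant plays starting with $e$, hence it also witnesses $\optE_{G}(e) \le c$; thus $\optE_{G}(e) \le \optE_{G'}(e)$ for all $e \in E'$. (Equivalently: on $E'$, the operator $\fixpointE$ instantiated on $G'$ dominates the one instantiated on $G$ pointwise --- a $\p{0}$-node minimises over a smaller edge set, a $\p{1}$-node is unchanged --- so their least fixed points are ordered the same way.)

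Finally, I would close with the standard squeeze for a monotone iteration. By \cref{thm:edgeOptimal}, the iteration from $\mu_{0}$ is monotone non-decreasing and stabilises, after finitely many steps, at $\optE_{G'}$ (each per-edge value ranges over the finite chain $\{0,\dots,\abs{V}\cdot W+1\}\cup\{\infty\}$). Since $\mu_{0} \le \activation_{\energyTemplate} \le \optE_{G'}$ on $E'$ and $\fixpointE$ (for $G'$) is monotone, one gets $\fixpointE^{n}(\mu_{0}) \le \fixpointE^{n}(\activation_{\energyTemplate}) \le \fixpointE^{n}(\optE_{G'}) = \optE_{G'}$ for all $n$; taking $n$ past the stabilisation point of the $\mu_{0}$-run forces $\fixpointE^{n}(\activation_{\energyTemplate}) = \optE_{G'}$ as well. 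Hence $\computeFixpoint(G',\weight,\fixpointE,\activation_{\energyTemplate}) = \optE_{G'}$, and the two $\energyTemp$ calls return the same winning region and the same QaSTel, which proves the corollary. The only non-routine ingredient is the inequality $\optE_{G} \le \optE_{G'}$ on $E'$ (removing $\p{0}$-edges does not lower the optimal initial credit on surviving edges); everything else is bookkeeping on a monotone fixed-point iteration over $\Ninf$. I also note that the full premise of \cref{prop:correctnessOfMinAct} is not actually needed for this argument --- it suffices that $G'$ arises from $G$ by removing $\p{0}$-edges.
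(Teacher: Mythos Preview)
Your proof is correct and follows essentially the same route as the paper: identify $\activation_{\energyTemplate}$ with $\optE_G$, show $\optE_G \le \optE_{G'}$ on $E'$ by lifting $\p{0}$-strategies from $G'$ back to $G$, and then conclude via monotonicity of the value-iteration operator (the paper phrases this last step as an appeal to Knaster--Tarski, whereas your explicit sandwich $\fixpointE^{n}(\mu_0)\le\fixpointE^{n}(\optE_G)\le\optE_{G'}$ is marginally more careful). Your closing remark that only the removal of $\p{0}$-edges is used, not the full hypothesis of \cref{prop:correctnessOfMinAct}, is likewise consistent with the paper's own argument.
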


\begin{proof}
	Let the operator for edge-based value iteration for game $(G',\energy(\weight))$ be $\fixpointE'$, and let $\optE$ and $\optE'$ be the edge-optimal value functions for $(G,\energy(\weight))$ and $(G',\energy(\weight))$ respectively.
	As the activation function of the optimal QaSTel is same as the edge-optimal value function,
	it is enough to show that the fixed point computation of $\fixpointE'$ starting from $\optE$ gives the function $\optE'$.

	Let us first shows that $\optE(e) \leq \optE'(e)$ for all $e\in E\setminus\{e_*\}$.
	Let $e\in E\setminus\{e_*\}$ be an arbitrary edge.
	If $\optE'(e) = \infty$, then $\optE(e) \leq \optE'(e)$ trivially holds.
	Suppose $\optE'(e)\neq \infty$, then by definition, for every initial credit $c\geq \optE'(e)$, there exists a $\pz$ strategy $\strat_c$ with $\plays(G',\strat_c)\subseteq \energy_c(\weight)$. As $e_*\in E_0$, the strategy $\strat_c$ is also a well-defined $\pz$ strategy in game graph $G$.
	Hence, for every initial credit $c\geq \optE'(e)$, we have $\plays(G,\strat_c)\subseteq \energy_c(\weight)$, which implies $\optE(e)\geq \optE'(e)$.

	As $\optE'$ is the least fixed point of the monotonic operator $\fixpointE'$ and $\optE\leq\optE'$, by the Knaster-Tarski theorem, the fixed point computation of $\fixpointE'$ starting from $\optE$ gives the function $\optE'$.\qed
\end{proof}

\subsection{Bounding PeSTels}\label{app:boundedPestels}
Given a game graph $\gamegraph = (V,E)$, PeSTels (as defined in \cite{Anandetal_SynthesizingPermissiveWinning_2023}) contain three types of edge templates: (i) \emph{unsafe} edges $\safegroup\subseteq E_0$, (ii)  \emph{co-live} edges $\colivegroup\subseteq E_0$ and (iii) \emph{live} groups $\livegroup\subseteq 2^E_0$. Their combination $\template=(\safegroup,\colivegroup,\livegroup)$ is called a PeSTel, which represents the objective $\plays_\template(G) = \{\play \in V^\omega \mid \forall e\in\safegroup:\ e \not\in\play \text{ and } \forall e\in\colivegroup:\ e \not\in \inf(\play) \text{ and } \forall \livegroupSingleN\in\livegroup:\ \src(\livegroupSingleN)\cap\inf(\play)\neq\emptyset \Rightarrow \livegroupSingleN\cap\inf(\play)\neq\emptyset\}$, where $\src(\livegroupSingleN)$ denotes the source nodes of the edges in $\livegroupSingleN$.

Within this paper, we restrict our attention to PeSTels which only consist of safety 
\& co-live 
templates and call such PeSTels \emph{bounded} as constraint edges 
can never be 
taken unboundedly. %

Given 
a quantitative objective $\spec$ which can be modelled as a parity objective over $G$, one can 
compute a 
bounded PeSTel $\template=(\safegroup,\colivegroup)$ for $(G,\spec)$ by using the algorithm 
\textsc{ParityTemplate} from \cite[Alg.3]{Anandetal_SynthesizingPermissiveWinning_2023} to compute a (full) PeSTel 
$\widetilde{\template}=(\safegroup,\colivegroup,\livegroup)$ first. This PeSTel 
$\widetilde{\template}$ can 
then be \emph{bounded} into a bounded template $\template$ by 
adding all non-live outgoing edges of a 
live-group node, i.e.\ $\{(q,q')\notin H\cup\safegroup|(q,\cdot)\in H\}$ to $\colivegroup$, and then setting $\livegroup:=\emptyset$.

\subsection{Proof of \Cref{prop:extractMixedStrategy}}
\restatePextractMixedStrategy*

\begin{proof}
	By the definition of quantitative conflict-freeness, for every node $ v\in V_0$, there 
	is an edge $ e\in \minActivationEdges(v) $ that is neither colive nor unsafe.
	With this, let us define a positional strategy $\strat$ such that for every node $ v\in 
	V_0 $, $\strat(v)$ is an edge $ e\in \minActivationEdges(v) $ that is neither colive nor 
	unsafe.
	Then, by the definition colive edges, $\strat$ follows the PeSTel $(\safegroup,\colivegroup)$.
	Furthermore, by the proof of \Cref{prop:extractStrategy}, $\strat$ also follows the QaSTel $\energyTemplate$.
	Therefore, $\strat$ follows the mixed template $\Strat$.
\qed\end{proof}

\subsection{Proof of \cref{thm:mixedTemplate}}
\restateMixedTemplateThm*
\begin{proof}
    Let $\Strat = (\safegroup,\colivegroup,\energyTemplate)$.
    The template $\Strat$ is trivially conflict-free due to the while loop in the algorithm. 
    Let after $ i $-th iteration of the while loop, $\spec^i$ be the qualitative objective, $(\safegroup^i , \colivegroup^i) $ be the PeSTel obtained, 
	and $ \energyTemplate^i $ be the QaSTel obtained. Moreover, 
	let $ \win_\spec^i $ and $ \win_\varphi^i $ be the corresponding regions after the $ i $-th iteration.
	For soundness, we need to show that for every $i\geq 0$, $\Strat^i$ is a winning mixed template from $\win^i = \win_\spec^i\cap \win_\varphi^i$ in game $\game$.

	Fix some $i\geq 0$. Let $u\in\win^i$ and let $\strat$ be a strategy following $\Strat^i$, i.e., $ (G,\strat) \vDash_{c}\energyTemplate^i $ for some $ c \geq W \cdot |V| $ and $ (G,\strat) \vDash (\safegroup^i,\colivegroup^i) $.
	By soundness of $\computeTemp$, $\strat$ is winning from $u$ for objective $\spec^i$. As in each iteration, we are just adding additional safety objectives to $\spec$, $\strat$ is also winning from $u$ for objective $\spec$.
	Furthermore, as PeSTels only marks $\pz$'s edges as unsafe or co-live, the conflict edges only belong to $\pz$. Hence, 
	by \cref{prop:correctnessOfHotstarting}, $\strat$ is winning from $u$ for objective $\varphi$.
	So, $\strat$ is a strategy that does not use conflict edges and is winning from $u$ for objective $\varphi$.
	In total, $\strat$ is winning from $u$ for both objectives $\spec$ and $\varphi$, and hence, is winning from $u$ in game $\game$.\qed
\end{proof}

\subsection{Proof of \cref{corr:MPcobuchi}}\label{sec:MPcobuchi}
\begin{corollary}\label{corr:MPcobuchi}
	Given a mean-payoff \cobuchi game $(\gamegraph,\meanpayoff(\weight)\wedge\cobuchi(T))$, if $ (\win, \Strat) =  \mixedTemp(\gamegraph,\weight,\cobuchi(T))$, then $ \Strat$ is a conflict-free winning mixed template from $ \win $. Furthermore, the procedure terminates in time $ \bigO(n^2m+nmW) $, where $ n = \abs{V} $, $ m = \abs{E} $, and $ W$ is the maximum weight in $\weight$.
\end{corollary}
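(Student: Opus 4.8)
For the soundness part of the corollary -- that $\Strat$ is a conflict-free winning MiSTel from $\win$ -- the plan is to reduce directly to \cref{thm:mixedTemplate}, instantiated with the qualitative objective $\spec := \cobuchi(T)$ and the quantitative objective $\varphi := \meanpayoff(\weight)$. The only gap is that \cref{alg:mixedtemplate} is now run with the co-Büchi specialization $\cobuchiTemp$ of \cite[Alg.~2]{Anandetal_SynthesizingPermissiveWinning_2023} in place of $\computeTemp$, so one checks that the proof of \cref{thm:mixedTemplate} uses nothing about the PeSTel routine beyond (i) soundness (every strategy following the returned $(\safegroup,\colivegroup)$ wins $\spec$ from $\win_\spec$) and (ii) that it constrains only $\pz$-edges. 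Both properties are inherited by $\cobuchiTemp$, since it outputs a bounded PeSTel for a co-Büchi game. Conflict-freeness of $\Strat$ is then immediate from the exit condition of the \textbf{while} loop.

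For the complexity bound I would control (a) the cost of one call to $\FindConflicts$, (b) how many iterations of the \textbf{while} loop do ``expensive'' work, and (c) the amortized cost of the hot-started $\energyTemp$ calls. For (a): $\cobuchiTemp(\gamegraph,\spec)$ solves a co-Büchi game (on $\gamegraph$ subject to the accumulated safety requirement) via a constant number of attractor computations, hence $\bigO(nm)$; $\energyTemp$ from a cold start is $\bigO(nmW)$ by \cref{thm:edgeOptimal}; and intersecting the winning regions and collecting $\conflict$ is $\bigO(m)$. For (b): the safety constraints added in \cref{line:alg:mixedTemp:pestelAddSafety} collapse, since $\safetygame(\win^{i})\wedge\safetygame(\win^{i-1})=\safetygame(\win^{i})$ as $\win^{i}\subseteq\win^{i-1}$, and both inter-iteration updates (tightening this constraint, and raising conflict-edge weights to $\infty$) only shrink the respective winning regions, so $\win^{i+1}\subseteq\win^{i}$; moreover, if $\win^{i+1}=\win^{i}$ then $\cobuchiTemp$ sees an unchanged objective and returns the same PeSTel, so only the QaSTel changes in that round, and since each such round pushes at least one finite activation value of a conflict edge to $\infty$, there are $\bigO(m)$ such ``QaSTel-only'' rounds in total, while at most $n$ rounds strictly shrink $\win$. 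Charging the full $\cobuchiTemp$ cost only to the latter (the PeSTel recomputations in the cheap rounds being no-ops under hot-starting) gives $\bigO(n)\cdot\bigO(nm)=\bigO(n^2 m)$ for all PeSTel work. For (c): every edge activation value increases monotonically over the whole run and stays in $[0; nW]\cup\{\infty\}$, so the total number of value updates across all hot-started value iterations is $\bigO(nmW)$, the same as a single cold start, with correctness of hot-starting given by \cref{prop:correctnessOfHotstarting} / \cref{eq:hotstarting}. Summing (a)--(c) yields $\bigO(n^2 m + nmW)$.

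The main obstacle is part (b): rigorously bounding the iteration count. One must show that a round in which $\win$ does not shrink genuinely leaves the qualitative side untouched (hence generates no ``new'' PeSTel work), bound the number of QaSTel-only conflict-clearing rounds, and confirm that their aggregate cost is dominated by the $\bigO(n^2m+nmW)$ already accounted for. The second delicate point is establishing the amortized value-iteration bound across the whole sequence of hot-started $\energyTemp$ calls rather than per call; both reduce to careful monotonicity bookkeeping on $\win$ and on $\activation_\energyTemplate$, but neither is entirely routine.
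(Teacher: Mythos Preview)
Your proposal is correct and follows essentially the same approach as the paper: soundness is delegated to \cref{thm:mixedTemplate}, the $\cobuchiTemp$ calls are bounded by the at most $n$ rounds in which $\win$ strictly shrinks (contributing $\bigO(n^2m)$), and the hot-started $\energyTemp$ calls are amortized to $\bigO(nmW)$ total via monotonicity of the activation values. If anything, you are more careful than the paper's proof, which does not explicitly bound the number of ``QaSTel-only'' rounds or argue termination of the loop beyond the amortized $\energyTemp$ cost; your identification of these as the delicate points is apt, but they are handled by exactly the monotonicity bookkeeping you outline.
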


\begin{proof}
	As the soundness directly follows from \cref{thm:mixedTemplate}, we only need to show the complexity.
	We know that each round of while loop uses at most one call to $\cobuchiTemp$ and one call to $\energyTemp$.
	Furthermore, if the winning region does not change in an iteration, then $\cobuchiTemp$ need not be called as the qualitative objective remains the same.
	Then $\cobuchiTemp$ will be called at most $ n $ times, which takes time $ \bigO(n^2m) $ in total.
	Moreover, since we are hot-starting the $ \energyTemp $ algorithm, in total, the time taken by $\energyTemp$ across all iterations is $ \bigO(nmW) $. Hence, the algorithm terminates in time $ 
	\bigO(n^2m + nmW) $.\qed
\end{proof}
\section{Plots from the experiments}

\begin{figure}[t]\label{plotsInAppendix}
	\centering
	\begin{subfigure}[b]{0.95\textwidth}
		\includegraphics[width=\textwidth]{img/svg/faultTolerance.pdf}
		\caption{Fault tolerance}\label{plotsApp_faultTolerance}
	\end{subfigure}
	
	\vspace{1em}
	
	\begin{subfigure}[b]{0.95\textwidth}
		\includegraphics[width=\textwidth]{img/svg/incremental.pdf}
		\caption{Incremental synthesis}\label{plotsApp_incremental}
	\end{subfigure}
	
	\vspace{1em}
	
	\begin{subfigure}[b]{0.95\textwidth}
		\includegraphics[width=\textwidth]{img/svg/allMPCoBuechi.pdf}
		\caption{Rounds of conflict resolution}\label{plotsApp_allMPCoBuechi}
	\end{subfigure}
\end{figure}

\begin{figure}[t]
	\ContinuedFloat
	\centering
	
	\begin{subfigure}[b]{0.95\textwidth}
		\includegraphics[width=\textwidth]{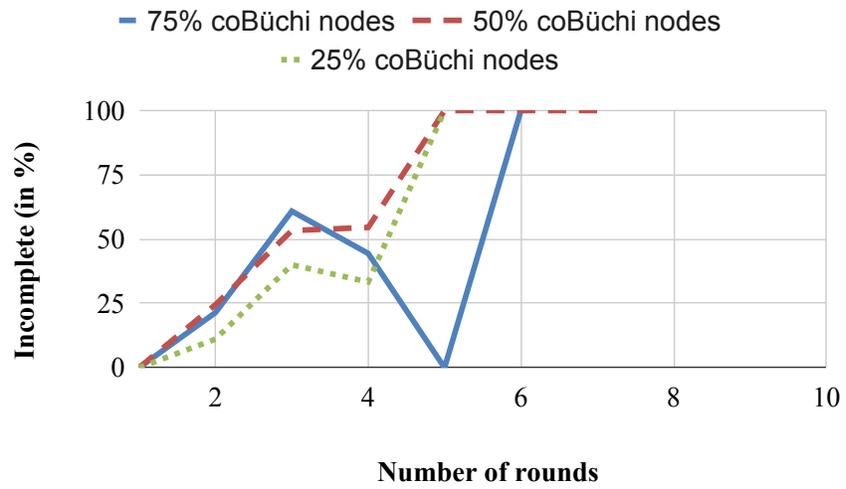}
		\caption{Incompleteness of \tool}\label{plotsApp_incomplete}
	\end{subfigure}
	
	\caption{Plots summarizing the experimental evaluations.}
\end{figure}

\label{endofdocument}
\newoutputstream{pagestotal}
\openoutputfile{main.pgt}{pagestotal}
\addtostream{pagestotal}{\getpagerefnumber{endofdocument}}
\closeoutputstream{pagestotal}

\end{document}